\DeclareMathOperator*{\E}{E}
\DeclareMathOperator*{\Var}{Var}
\newcommand{\floor}[1]{\lfloor {#1} \rfloor}
\newcommand{\bfloor}[1]{\big\lfloor {#1} \big\rfloor}
\newcommand{\bbfloor}[1]{\bigg\lfloor {#1} \bigg\rfloor}
\newcommand{\ceil}[1]{\lceil {#1}\rceil}
\newcommand\eps\varepsilon
\newcommand\Z{\mathbb Z}
\newtheorem {lemma} {Lemma}[section]
\newtheorem {property} {Property}
\newtheorem {definition} {Definition}
\newtheorem {theorem}[lemma] {Theorem}
\newcommand{\unif}{\mathcal{U}}
\newcommand{\andtt}{ \mathbin{\texttt{\&}} }
\newcommand{\ls}{ \mathbin{\texttt{<\!<}} }
\newcommand{\rs}{ \mathbin{\texttt{>\!>}} }
\title{The Power of Hashing with Mersenne Primes}
\author[1]{Thomas Dybdahl Ahle}
\author[2]{Jakob Tejs B\ae{}k Knudsen}
\author[2]{Mikkel Thorup}
\affil[1]{Facebook, BARC, \textit{thomas@ahle.dk}}
\affil[2]{University of Copenhagen, BARC, \textit {\{jakn, mthorup\}@di.ku}}
\begin{document}
\maketitle

\begin{abstract}
The classic way of computing a $k$-universal hash function is to use a random degree-$(k-1)$ polynomial over a prime field $\mathbb Z_p$.
For a fast computation of the polynomial, the prime $p$ is often chosen as a Mersenne prime $p=2^b-1$.

In this paper, we show that there are other nice advantages to using Mersenne primes.
Our view is that the hash function's output is a $b$-bit integer that is uniformly distributed in $\{0, \dots, 2^b-1\}$, except that $p$ (the all \texttt1s value in binary) is missing.
Uniform bit strings have many nice properties, such as splitting into substrings which gives us two or more hash functions for the cost of one, while preserving strong theoretical qualities.
We call this trick ``Two for one'' hashing, and we demonstrate it on 4-universal hashing in the classic Count Sketch algorithm for second-moment estimation.

We also provide a new fast branch-free code for division and modulus
with Mersenne primes. Contrasting our analytic work, this code
generalizes to any Pseudo-Mersenne primes $p=2^b-c$ for small $c$.
\end{abstract}

\setcounter{tocdepth}{2}
\renewcommand{\contentsname}{}
\vspace{-3.5em}
\tableofcontents
\newpage

\section{Introduction}

\begin{figure}
	\centering
	\begin{tikzpicture}[darkstyle/.style={circle,draw,fill=gray!40,minimum size=20}]
		\newcommand*{\figb}{8}
		\draw[pattern=north west lines, pattern color=red] (0,0) rectangle (\figb,1);
		\foreach \y in {0,...,6}
		\draw (0, \y) -- (\figb, \y);
		\foreach \x in {0,...,\figb}
		\draw (\x, 0) -- (\x, 6);
		\pgfmathsetmacro{\figbthree}{\figb - 3}
		\pgfmathsetmacro{\figbtwo}{\figb - 2}
		\pgfmathsetmacro{\figbone}{\figb - 1}
		\foreach \y in {0,...,2}
		\foreach \x in {0,...,\figbthree}
		\node [draw=none] at (.5+\x,.5+\y) {1};
		\foreach \x in {\figbthree,...,\figbone}
		\node [draw=none] at (.5+\x,.5+3) {.};
		\foreach \y in {4,...,5}
		\foreach \x in {0,...,\figbtwo}
		\node [draw=none] at (.5+\x,.5+\y) {0};
		\node [draw=none] at (.5+\figb-1,.5+0) {1};
		\node [draw=none] at (.5+\figb-2,.5+0) {1};
		\node [draw=none] at (.5+\figb-1,.5+1) {0};
		\node [draw=none] at (.5+\figb-2,.5+1) {1};
		\node [draw=none] at (.5+\figb-1,.5+2) {1};
		\node [draw=none] at (.5+\figb-2,.5+2) {0};
		\node [draw=none] at (.5+\figb-1,.5+4) {1};
		\node [draw=none] at (.5+\figb-1,.5+5) {0};
	\end{tikzpicture}
	\caption{The output of a random polynomial modulo $p=2^b-1$ is uniformly distributed in $[p]$, so each bit has the same distribution, which is only $1/p$ biased towards 0.}
	\label{fig:bits}
\end{figure}

The classic way to implement $k$-universal hashing is to use a random degree $(k-1)$-polynomial over a finite field \cite{wegman81kwise}.
Mersenne primes, which are prime numbers on the form $2^b-1$, have been used to implement finite fields efficiently for more than 40 years using standard portable code \cite{carter77universal}.

The speed of hashing is important because it is often an inner-loop
bottle-neck in data analysis. A good example is when hashing is used
in the sketching of high volume data streams, such as traffic through
an Internet router, and then this speed is critical to keep up with
the stream. A running example in this paper is the classic second
moment estimation using 4-universal hashing in count sketches
\cite{charikar04count-sketch}.
Count Sketches are linear maps that statistically preserve the Euclidean norm.
They are also popular in machine learning under the name
``Feature Hashing''~\cite{moody1989fast,weinberger2009feature}.

In this paper, we argue that uniform random values from Mersenne prime fields
are not only fast to compute but
\emph{have special advantages different from any other field.}
While it is natural to consider values mod $p=2^b-1$ as ``nearly'' uniform $b$-bit strings
(see \Cref{fig:bits}),
we show that the small bias in our hash values can usually be turned into an advantage.
In particular our analysis justify splitting single hash values into two or more for a significant computational speed-up, what we call the ``Two for one'' trick.

We also show that while the $1/p$ bias of such strings would usually result in relative errors of order $n/p$ for Count Sketch, a specialized analysis yields relative errors of just $n/p^2$.
The analysis is based on simple moments, and give similar improvements for any algorithm analyzed this way.
Loosely speaking, this means that we for a desired small error can reduce
the bit-length of the primes to less than half. This saves not only
space, it means that we can speed up the multiplications
with a factor of 2.

Finally we provide a fast, simple and branch-free algorithm for division and modulus with Mersenne primes.
Contrasting our analytic work, this code generalizes to so-called Pseudo-Mersenne primes~\cite{van2014encyclopedia} of the form $p=2^b-c$ for small $c$.
Our new code is simpler and faster than the classical algorithm of Crandall~\cite{crandall1992method}.

We provide experiments of both algorithms in \Cref{sec:experiments}.
For the rest of the introduction we will give a more detailed review of the new results.

\subsection{Hashing uniformly into b bits}\label{sec:b-bit?}
A main point in this paper is that having hash values uniform in $[2^b-1]=\{0,\dots,2^b-2\}$
is almost as good as having uniform $b$-bit strings, but of course,
it would be even better if we just had uniform $b$-bit strings.

We do have the fast multiply-shift scheme of Dietzfelbinger~\cite{dietzfel96universal}, which directly gives 2-universal
hashing from $b$-bit strings to $\ell$-bit strings, but for $k>2$,
there is no such fast $k$-universal hashing scheme that
can be implemented with standard portable code.

More recently it has been suggested to use carry-less multiplication
for $k$-universal hashing into bit strings (see, e.g., Lemire
\cite{lemire2014strongly}) but contrasting the hashing with Mersenne primes,
this is less standard (takes some work to get it to run on different
computers) and slower (by about 30-50\% for larger $k$ on the computers we tested in \Cref{sec:experiments}).
Moreover, the code for different bit-lengths $b$ is quite different
because we need quite different irreducible polynomials.

Another alternative is to use tabulation based methods which are fast
but use a lot of space \cite{Siegel04,Tho13:simple-simple}, that is,
space $s=2^{\Omega(b)}$ to calculate $k$-universal hash function in
constant time from $b$-bit keys to $\ell$-bit hash values. The large
space can be problematic.

A classic example where constant space hash functions are needed is in static two-level hash functions \cite{FKS84}.
To store n keys with constant access time, you use $n$ second-level hash tables, each with its own hash function.
Another example is small sketches such as the Count Sketch \cite{charikar04count-sketch} discussed in this paper.
Here we may want to store the hash function as part of the sketch, e.g., to query the value of a given key.
Then the hash value has to be directly computable from the small representation, ruling out tabulation based methods (see further explanation at the end of \Cref{sec:count-sketch}).

It can thus be problematic to get efficient $k$-universal hashing directly into
$b$-bit strings, and this is why we in this paper analyse the
hash values from Mersenne prime fields that are much easier to generate.

\subsection{Polynomial hashing using Mersenne primes}

Before discussing the special properties of Mersenne primes in algorithm analysis, we show how they are classically used to do fast field computations, and propose a new simple algorithm for further speed-ups in the hashing case.

The definition of $k$-universal hashing
goes back to Carter and Wegman~\cite{wegman81kwise}.
\begin{definition}
	A random hash function $h:U\to R$ is $k$-universal if for $k$
	distinct keys $x_0,\ldots,x_{k-1}\in U$, the $k$-tuple
	$(h(x_0),\ldots,h(x_{k-1}))$ is uniform in $R^k$.
\end{definition}
\noindent
Note that the definition also implies the values
$h(x_0),\ldots,h(x_{k-1})$ are independent.
A very similar concept is that of $k$-independence, which has only this requirement but doesn't include that values must be uniform.

For $k>2$ the standard $k$-universal hash function is uniformly random degree-$(k-1)$ polynomial over a prime field
$\Z_p$, that is, we pick a uniformly random vector
$\vec a=(a_0,\ldots,a_{k-1})\in \Z_p^k$ of $k$ coefficients, and define
$h_{\vec a}:[p]\to[p]$,
\footnote{ We use the notation $[s]=\{0,\ldots,s-1\}$.  }
by
\[h_{\vec a}(x)=\sum_{i\in[k]}a_i x^i \mod p.\]
Given a desired key domain $[u]$ and range $[r]$ for the hash values, we pick
$p\geq \max\{u,r\}$ and define
$h^r_{\vec a}:[u]\to[r]$ by
\[h^r_{\vec a}(x)=h_{\vec a}(x)\bmod r.\]
The  hash values of $k$ distinct keys remain independent while staying as close as possible to the uniform distribution on $[r]$.
(This will turn out to be very important.)

In terms of speed, the main bottleneck in the above approach is the mod operations.
If we assume $r=2^\ell$, the mod $r$ operation above can be replaced by a binary {\sc and} (\texttt{\&}): $x \bmod r = x \andtt r-1$.
Similarly, Carter and Wegman \cite{carter77universal} used a
Mersenne prime $p=2^b-1$,\footnote{e.g., $p=2^{61}-1$ for hashing 32-bit keys or
$p=2^{89}-1$ for hashing 64-bit keys.}
to speed up the computation of the (mod $p$) operations:
\begin{equation}
	y
	\equiv y - \floor{y/2^b}(2^b-1)
	= (y\bmod 2^{b}) + \floor{y/2^b}
	\pmod {p}.
	\label{eq:Mersenne}
\end{equation}
Again allowing us to use the very fast bit-wise {\sc and} ($\andtt$) and the right-shift ($\rs$),
instead of the expensive modulo operation.

Of course, \eqref{eq:Mersenne} only reduces $y$ to an equivalent value mod $p$, not to the smallest one, which is what we usually want.
For this reason one typically adds a test ``if $y \ge p$ then $y \gets y - p$''.
We show an implementation in \Cref{alg:Mersenne} below with one further improvement:
By assuming that $p=2^b-1\geq 2u-1$
(which is automatically satisfied in the typical case where $u$ is a power
of two, e.g., $2^{32}$ or $2^{64}$)
we can get away with only doing this test once, rather than at every loop.
Note the proof by loop invariant in the comments.

\begin{algorithm}[H]
	\caption{
	For $x\in [u]$, prime $p=2^b-1\geq 2u-1$,
	and $\vec a=(a_0,\ldots,a_{k-1})\in[p]^k$,
	computes $y=h_{\vec a}(x)=\sum_{i\in[k]}a_i x^i\mod p$.
	}\label{alg:Mersenne}
	\begin{algorithmic}
		\State $y\gets a_{k-1}$
		\For{$i=q-2,\ldots,0$}
		\Comment{Invariant: $\quad y<2p$}

		\State $y\gets y*x+a_i$
		\Comment{$\quad y<2p(u-1)+(p-1)<(2u-1)p\leq p^2$}

		\State $y\gets (y\andtt p)+(y\rs b)$
		\Comment{$\quad y<p+p^2/2^b<2p$}
		\EndFor
		\If{$y\geq p$}
		\State $y\gets y-p$
		\Comment{$y<p$}
		\EndIf
	\end{algorithmic}
\end{algorithm}

In \Cref{subsec:intro-division} we will give one further improvement to \Cref{alg:Mersenne}.
In the next sections we will argue that Mersenne primes are not only fast, but have special properties not found in other finite fields.


\subsubsection{Selecting arbitrary bits}\label{sec:power-of-two}
If we had b uniform bits, we could partition them any way we’d like and get smaller independent
strings of uniform random bits. The first property of random values modulo Mersenne primes
we discuss is what happens when the same thing is done on a random value in $[2^b - 1]$ instead.

More formally, let $\mu:[2^b]\to[2^\ell]$ be any map that selects
$\ell$ distinct bits, that is, for some $0\leq j_1<\cdots<j_{\ell}<b$,
$\mu(y)=y_{j_1}\cdots y_{j_\ell}$. For example, if $j_i=i-1$, then we
are selecting the most significant bits, and then $\mu$ can be
implemented as $y\mapsto y\rs (b-\ell)$. Alternatively, if $j_i=b-i$,
then we are selecting the least significant bits, and then $\mu$ can
be implemented as $y\mapsto y\andtt (2^\ell-1)=y\andtt (r-1)$.

We assume a $k$-universal hash function $h:[u]\to[p]$, e.g.,
the one from \Cref{alg:Mersenne}. To get hash values in $[r]$,
we use $\mu\circ h$. Since $\mu$ is deterministic,
the hash values of up to $k$ distinct keys remain
independent with $\mu\circ h$. The issue is that hash values from
$\mu\circ h$ are not quite uniform in $[r]$.

Recall that for any key $x$, we have $h(x)$ uniformly distributed in $[2^b-1]$.
This is the uniform distribution on $b$-bit strings except that we are
missing $p=2^b-1$. Now $p$ is the all \texttt{1}s, and
$\mu(p) = r-1$.
Therefore
\begin{align}
	\text{for $i < r-1$,}\quad
	\Pr[\mu(h(x))=i]
	 & =\lceil p/r\rceil/p
	=((p+1)/r)/p
	=(1+1/p)/r
	\label{eq:coll-ell<r-1}
	\\
	\text{and}\quad
	\Pr[\mu(h(x))=r-1]
	 & =\lfloor p/r\rfloor/p=((p+1-r)/r)/p
	=(1-(r-1)/p)/r.
	\label{eq:coll-ell=r-1}
\end{align}
Thus $\Pr[\mu(h(x))=i]\leq (1+1/p)/r$ for all $i\in[r]$.
This upper-bound only has a relative error of $1/p$ from the uniform $1/r$.

Combining \eqref{eq:coll-ell<r-1} and \eqref{eq:coll-ell=r-1} with
pairwise independence, for any distinct keys $x,y\in [u]$, we show that the
collision probability is bounded
\begin{align}
	\Pr[\mu(h(x))=\mu(h(y))]
	 & =(r-1)((1+1/p)/r)^2+((1-(r-1)r/p)/r)^2 \nonumber
	\\&=(1+(r-1)/p^2)/r
	.\label{eq:coll}
\end{align}
Thus the relative error $r/p^2$ is small as long as $p$ is large.

\paragraph{The problem with non-Mersenne Primes}

Suppose $c\neq 1$ and we want to select arbitrary bits like in the arguments above.
If we pick the least significant bits we get a generic upper bound of $(1+c/p)/r$, which is not too bad for small $c$.
Here there is no conceptual difference to our Mersenne results.

However take the opposite extreme where
we pick just the one most significant bit
and $c=2^{b-1}-1$ (so $p=2^{b-1}+1$, a Fermat prime).
That bit is $0$ with probability $1-1/p$ and 1 only with probability $p$ -- virtually a constant.
We might try to fix this by xoring the output with a random number from $[2^b]$ (or add $C\in[2^b]$ and take mod $2^b$), but that will only make the bits uniform, not actually dependent on the key.
Thus if we hash two keys, $x_1$ and $x_2$, mod $2^{b-1}+1$ and take the top bit from each one, \emph{they will nearly always be the same, independent of whether $x_1=x_2$}.

More generally,
say we pick the $\ell$ most significant bits
and $c\leq 2^{b-1}-2^{b-\ell}$, then $2^{b-\ell}$ elements from
$[p]$ map to $0$ while only $\max\{0,2^{b-\ell}-c\}$
map to the all \texttt{1}s.
More concretely, take $\ell=b/2$ and $c=2^{b/2}\approx\sqrt{p}$ (typical for generalized Mersenne primes) \emph{then the top $\ell$ bits hit the all \texttt{1}s with 0 probability}, while the all \texttt{0}s is twice as common as the remaining values.

\subsection{Two-for-one hash functions in second moment estimation}
In this section, we discuss how we can get several hash functions for
the price of one, and apply the idea to second moment estimation using
Count Sketches \cite{charikar04count-sketch}.

Suppose we had a $k$-universal hash function into $b$-bit strings.
We note that using standard programming languages such as C, we have
no simple and efficient method of computing such hash
functions when $k>2$. However, later we will argue that polynomial
hashing using a Mersenne prime $2^b-1$ delivers a better-than-expected
approximation.

Let $h:U\to [2^b]$ be $k$-universal. By definition this
means that if we have $j\leq k$ distinct keys $x_0,\ldots,x_{j - 1}$, then
$(h(x_0),\ldots,h(x_{j - 1}))$ is uniform in $[2^b]^j\equiv [2]^{bj}$,
so this means that \emph{all} the bits in $h(x_0),\ldots,h(x_{j - 1})$ are
independent and uniform. We can use this to split our $b$-bit hash
values into smaller segments, and sometimes use them as if
they were the output of universally computed hash functions.

We illustrate this idea below in the context of the second moment estimation.
For this purpose the ``split'' we will be considering is into the first bit and the remaining bits.
\footnote{Note there are other ways to construct this sketch, which only use one hash function, such as~\cite{thorup12kwise}.
The following should thus not be taken as ``the only way'' to achieve this result, but as an example of how the ``intuitive approach'' turns out to work when hashing with Mersenne primes.}

\subsubsection{Second moment estimation}\label{sec:count-sketch}
We now review the second moment estimation of streams based on Count Sketches \cite{charikar04count-sketch} (which are based on the
celebrated second moment AMS-estimator from \cite{alon96frequency}.)

The basic setup is as follows:
For keys in $[u]$ and integer values in $\Z$, we are given a stream of key/value $(x_0,\Delta_0),\ldots, (x_{n-1},\Delta_{n-1})\in [u]\times\Z$. The
total value of key $x\in[u]$ is
\[f_x=\sum_{i\in[n],x_i=x} \Delta_i.\]
We let $n\leq u$ be  the number of non-zero values
$f_x\neq 0$, $x\in [u]$. Often $n$ is much smaller than $u$.
We define the $m$th moment $F_m = \sum_{x\in [u]}f_y^m$. The goal here is to
estimate the second moment $F_2 = \sum_{x\in [u]}f_x^2=\|f\|^2_2$.

\begin{algorithm}[H]
	\caption{\label{alg:count-sketch} Count Sketch. Uses a
	vector/array $C$ of $r$ integers and two independent
	4-universal hash functions $i:[u]\to[r]$ and $s:[u]\to\{-1,1\}$.
	}
	\begin{algorithmic}
		\Procedure{Initialize}{}
		\State For $i\in[t]$, set $C[i]\gets 0$.
		\EndProcedure
		\Procedure{Process}{$x, \Delta$}
		\State $C[i(x)]\gets C[i(x)]+s(x) \Delta$.
		\EndProcedure
		\Procedure{Output}{}
		\State \Return $\sum_{i\in[t]} C[i]^2$.
		\EndProcedure
	\end{algorithmic}
\end{algorithm}
The standard analysis \cite{charikar04count-sketch} shows that
\begin{align}
	\E[X]   & = F_2 \label{eq:E-F2}                      \\
	\Var[X] & =2(F_2^2 - F_4)/r<2F_2^2/r \label{eq:V-F2}
\end{align}
We see that by choosing larger and larger r we can make X concentrate around $F_2=\|f\|^2_2$. Here
$X=\sum_{i\in[r]} C[i]^2=\|C\|^2_2$. Now $C$ is a randomized function
of $f$, and as $r$ grows, we get $\|C(f)\|^2_2\approx\|f\|^2_2$,
implying $\|C(f)\|_2\approx\|f\|_2$, that is, the Euclidean norm is
statistically preserved by the Count Sketch. However, the Count Sketch
is also a linear function, so Euclidean distances are statistically
preserved, that is, for any $f,g\in \Z^u$,
\[\|f-g\|_2\approx \|C(f-g)\|_2=\|C(f)-C(g)\|_2.\]
Thus, when we want to find close vectors, we can just work with the
much smaller Count Sketches.
The count sketch $C$ can also be used to estimate any single value $f_x$.
To do this, we use the unbiased estimator $X_x=s(x)C[i(x)]$.
This is yet another standard use of count sketch \cite{charikar04count-sketch}.
It requires direct access to both the sketch $C$ and the two hash functions $s$ and $i$.
To get concentration one takes the median of multiple such estimators.

\subsubsection{Two-for-one hash functions with b-bit hash values}
As the count sketch is described above,
it uses two independent 4-universal hash functions
$i:[u]\to[r]$ and $s:[u]\to\{-1,1\}$, but 4-universal hash functions
are generally slow to compute, so, aiming to save roughly a factor 2
in speed, a tempting idea is to compute them both using a single hash
function.

The analysis behind \eqref{eq:E-F2} and \eqref{eq:V-F2} does not quite
require $i:[u]\to[r]$ and $s:[u]\to\{-1,1\}$ to be independent.
It suffices that the hash values are uniform and that for any
given set of $j\leq 4$ distinct keys $x_0,\ldots,x_{j - 1}$, the $2j$ hash
values $i(x_0),\ldots,i(x_{j - 1}),s(x_0),\ldots,s(x_{j - 1})$ are independent.
A critical step in the analysis is that if
a value $A$ depends on the first $j-1$ values ($A=A(i(x_0),\ldots,i(x_{j - 1}),s(x_1),\ldots,s(x_{j - 1}))$), but doesn't depend
on $s(x_0)$, then
\begin{equation}\label{eq:E-0}
	\E[s(x_0) A] = 0 .
\end{equation}
This follows because $\E[s(x_0)]=0$ by uniformity of $s(x_0)$ and because $s(x_0)$ is independent of $A$.

Assuming that $r=2^\ell$ is a power of two, we can easily construct
$i:[u]\to[r]$ and $s:[u]\to\{-1,1\}$ using a single $4$-universal
hash function $h:[u]\to[2^b]$ where $b>\ell$. Recall that all the bits in
$h(x_0),\ldots,h(x_3)$ are independent. We can therefore use the
$\ell$ least significant bits of $h(x)$ for $i(x)$ and the most
significant bit of $h(x)$ for a bit $a(x)\in[2]$, and finally set
$s(x)=1-2a(x)$. It is then easy to show that if $h$ is $4$-universal
then $h$ satisfies \cref{eq:E-0}.
\begin{algorithm}[H]
	\caption{For key $x\in [u]$, compute $i(x)=i_x\in[2^\ell]$ and
	$s(x)=s_x\in\{-1,1\}$,\rule{5ex}{0ex}
	using $h:[u]\to [2^b]$ where $b>\ell$.}
	\label{alg:h-and-s}
	\begin{algorithmic}
		\State $h_x\gets h(x)$
		\Comment $h_x$ uses $b$ bits
		\State $i_x\gets h_x \andtt (2^\ell-1)$
		\Comment $i_x$ gets $\ell$ least significant bits of $h_x$
		\State $a_x\gets h_x\rs (b-1)$
		\Comment $a_x$ gets the most significant bit of $h_x$
		\State $s_x\gets 1-(a_x\ls1)$
		\Comment $a_x\in[2]$ is converted to a sign $s_x\in\{-1,1\}$
	\end{algorithmic}
\end{algorithm}
Note that Algorithm \ref{alg:h-and-s} is well defined as long as
$h$ returns a $b$-bit integer. However, \cref{eq:E-0} requires
that $h$ is $k$-universal into $[2^b]$, which in particular implies that
the hash values are uniform in $[2^b]$.

\subsubsection{Two-for-one hashing with  Mersenne primes}\label{sec:two-for-one}
Above we discussed how useful it would be with $k$-universal hashing
mapping uniformly into $b$-bit strings. The issue was that the lack of
efficient implementations with standard portable code if
$k>2$. However, when $2^b-1$ is a Mersenne prime $p\geq u$, then we do
have the efficient computation from Algorithm \ref{alg:Mersenne}
of a $k$-universal hash function $h:[u]\to[2^b-1]$. The hash values
are $b$-bit integers, and they are uniformly distributed, except that
we are missing the all \texttt{1}s value $p=2^b-1$. We want to
understand how this missing value affects us if we try to split the
hash values as in Algorithm \ref{alg:h-and-s}. Thus, we assume a
$k$-universal hash function $h:[u]\to[2^b-1]$ from which we construct
$i:[u]\to[2^\ell]$ and $s:[u]\to\{-1,1\}$ as
described in Algorithm \ref{alg:h-and-s}. As usual, we assume $2^\ell>1$.
Since $i_x$ and $s_x$ are
both obtained by selection of bits from $h_x$, we know from Section
\ref{sec:power-of-two} that each of them have close to uniform
distributions. However, we need a good replacement for \eqref{eq:E-0}
which besides uniformity, requires $i_x$ and $s_x$ to be independent,
and this is certainly not the case.

Before getting into the analysis, we argue that we really do get two
hash functions for the price of one. The point is that our efficient
computation in Algorithm \ref{alg:Mersenne} requires that we use a
Mersenne prime $2^b-1$ such that $u\leq 2^{b-1}$, and this is even if
our final target is to produce just a single bit for the sign function
$s:[u]\to\{-1,1\}$. We also know that $2^\ell<u$, for otherwise we
get perfect results implementing $i:[u]\to[2^\ell]$ as the identity
function (perfect because it is collision-free).  Thus we can assume
$\ell<b$, hence that $h$ provides enough bits for both $s$ and $i$.

We now consider the effect of the hash values from $h$ being uniform
in $[2^b-1]$ instead of in $[2^b]$. Suppose we want to compute the
expected value of an expression $B$ depending only on the independent
hash values $h(x_0),\ldots,h(x_{j - 1})$ of $j\leq k$ distinct keys
$x_0,\ldots,x_{j - 1}$.

Our generic idea is to play with the distribution of $h(x_0)$ while
leaving the distributions of the other independent hash values
$h(x_0)\ldots,h(x_{j - 1})$ unchanged, that is, they remain uniform in
$[2^b-1]$. We will consider having $h(x_0)$ uniformly distributed in
$[2^b]$, denoted $h(x_0) \sim \unif[2^b]$, but then we later have to
subtract the ``fake'' case where $h(x_0)=p=2^b-1$.  Making the
distribution of $h(x_0)$ explicit, we get
\begin{equation}\begin{split}
		\E_{h(x_0) \sim \unif[p]}[B]&=\sum_{y\in[p]}\E[B \mid h(x_0)=y]/p
		\\&=\sum_{y\in[2^b]}\E[B \mid h(x_0)=y]/p - \E[B \mid h(x_0)=p]/p
		\\ &=\E_{h(x_0) \sim \unif[2^b]}[B](p+1)/p - \E[B \mid h(x_0)=p]/p.\label{eq:play-with-dist}
	\end{split}\end{equation}
Let us now apply this idea our situation where $i:[u]\to[2^\ell]$ and
$s:[u]\to\{-1,1\}$ are constructed from $h$ as described in Algorithm
\ref{alg:h-and-s}. We will prove
\begin{lemma}\label{lem:remove-si}  Consider distinct keys $x_0,\ldots,x_{j - 1}$, $j\leq k$ and an expression $B=s(x_0)A$ where $A$
	depends on $i(x_0),\ldots,i(x_{j - 1})$ and $s(x_1),\ldots,s(x_{j - 1})$ but not
	$s(x_0)$. Then
	\begin{equation}\label{eq:remove-si}
		\E[s(x_0)A]=\E[A\mid i(x_0)=2^\ell-1]/p.
	\end{equation}
\end{lemma}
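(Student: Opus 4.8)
The plan is to apply the distribution-swapping identity \eqref{eq:play-with-dist} directly to $B=s(x_0)A$. This rewrites the desired expectation, taken over $h(x_0)\sim\unif[p]$, as a difference of two pieces: a ``main'' term in which $h(x_0)$ is instead uniform over all of $[2^b]$, scaled by $(p+1)/p$, minus a ``correction'' term $\E[B\mid h(x_0)=p]/p$ accounting for the single missing all-\texttt{1}s value. I would treat these two pieces separately.

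For the main term, I claim $\E_{h(x_0)\sim\unif[2^b]}[s(x_0)A]=0$, by exactly the argument behind \eqref{eq:E-0}. When $h(x_0)$ is a genuinely uniform $b$-bit string, its individual bits are independent and uniform; in particular the top bit producing $s(x_0)$ is uniform on $\{-1,1\}$ and independent of the low $\ell$ bits producing $i(x_0)$. Since the remaining keys are distinct from $x_0$, the values $h(x_1),\ldots,h(x_{j-1})$ are independent of $h(x_0)$ by $k$-universality, so $s(x_0)$ is independent of the whole collection $i(x_0),\ldots,i(x_{j-1}),s(x_1),\ldots,s(x_{j-1})$ that $A$ depends on. Hence $\E_{h(x_0)\sim\unif[2^b]}[s(x_0)A]=\E[s(x_0)]\,\E[A]=0$, and the entire main term drops out.

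For the correction term, I would analyze the effect of conditioning on $h(x_0)=p=2^b-1$. The all-\texttt{1}s value pins the top bit to $1$, so $s(x_0)=-1$, and pins the low $\ell$ bits to all-\texttt{1}s, so $i(x_0)=2^\ell-1$. Because $A$ does not depend on $s(x_0)$, this yields $\E[s(x_0)A\mid h(x_0)=p]=-\E[A\mid h(x_0)=p]$. The key remaining step is to argue $\E[A\mid h(x_0)=p]=\E[A\mid i(x_0)=2^\ell-1]$: although the event $h(x_0)=p$ fixes more bits of $h(x_0)$ (the sign bit and the middle bits) than merely fixing $i(x_0)$, the only part of $h(x_0)$ that $A$ reads is the low $\ell$ bits $i(x_0)$, while its other arguments $i(x_1),\ldots,i(x_{j-1}),s(x_1),\ldots,s(x_{j-1})$ are functions solely of $h(x_1),\ldots,h(x_{j-1})$, which are independent of $h(x_0)$. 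Thus conditioning on any event depending only on $h(x_0)$ leaves the joint distribution of those other arguments untouched, and both conditioning events fix $i(x_0)=2^\ell-1$ in the same way, so the two conditional expectations coincide.

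Combining the pieces, the main term contributes $0$ and the correction contributes $-(-\E[A\mid i(x_0)=2^\ell-1])/p$, which is precisely \eqref{eq:remove-si}. I expect the only genuinely delicate point to be the identity $\E[A\mid h(x_0)=p]=\E[A\mid i(x_0)=2^\ell-1]$: one must verify that $A$'s dependence on $h(x_0)$ is channelled entirely through $i(x_0)$, and that the independence of distinct keys' hash values is what lets us discard the extra bits that the stronger conditioning event also pins down.
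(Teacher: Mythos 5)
Your proposal is correct and follows essentially the same route as the paper: apply \eqref{eq:play-with-dist} to $B=s(x_0)A$, kill the $\unif[2^b]$ term by the independence and uniformity of the sign bit, and evaluate the correction term using the fact that $h(x_0)=p$ forces $s(x_0)=-1$ and $i(x_0)=2^\ell-1$. Your explicit justification that $\E[A\mid h(x_0)=p]=\E[A\mid i(x_0)=2^\ell-1]$ (because $A$ reads $h(x_0)$ only through $i(x_0)$ and its other arguments are independent of $h(x_0)$) is a step the paper leaves implicit, and it is argued correctly.
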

\begin{proof}
	When $h(x_0) \sim \unif[2^b]$, then $s(x_0)$ is uniform
	in $\{-1,1\}$ and independent of $i(x_0)$. The remaining
	$(i(x_i),s(x_i))$, $i\ge 1$, are independent of $s(x_0)$ because they
	are functions of $h(x_i)$ which is independent of $h(x_0)$, so
	we conclude that
	\[\E_{h(x) \sim \unif[2^b]}[s(x_0)A]=0\]
	Finally, when $h(x_0)=p$, we get $s(x_0)=-1$ and $i(x_0)=2^\ell-1$,
	so applying \eqref{eq:play-with-dist}, we conclude
	that
	\[\E[s(x_0)A] = -\E[s(x_0) A \mid h(x_0) = p]/p = \E[A \mid i(x_0)=2^\ell-1]/p.\]
\end{proof}
Above \eqref{eq:remove-si} is our replacement for \eqref{eq:E-0}, that is,
when the hash values from $h$ are uniform in $[2^b-1]$ instead of
in $[2^b]$, then $\E[s(x_0)B]$ is reduced by $\E[B \mid i(x_0)=2^\ell-1]/p$.
For large $p$, this is a small additive error. Using this in a careful
analysis, we will show that our fast second moment estimation
based on Mersenne primes performs almost perfectly:

\begin{theorem}\label{thm:h-and-s-p}
	Let $r>1$ and $u>r$ be powers of two and let $p=2^b-1>u$ be a
	Mersenne prime.
	Suppose we have a 4-universal hash function $h:[u]\to[2^b-1]$, e.g.,
	generated using Algorithm \ref{alg:Mersenne}. Suppose
	$i:[u]\to[r]$ and
	$s:[u]\to\{-1,1\}$ are constructed from $h$ as described in
	Algorithm \ref{alg:h-and-s}. Using this $i$ and $s$
	in the Count Sketch Algorithm \ref{alg:count-sketch}, the second moment
	estimate $X=\sum_{i\in[k]} C_i^2$ satisfies:
	\begin{align*}
		\E[X] < (1+n/p^2)\,F_2,
		\quad
		|\E[X] - F_2 | \le F_2 (n - 1)/p^2,
		\quad
		\Var[X]< 2F_2^2/r.
	\end{align*}
\end{theorem}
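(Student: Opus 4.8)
The plan is to run the standard Count Sketch analysis behind \eqref{eq:E-F2}--\eqref{eq:V-F2}, but to replace every use of the clean vanishing identity \eqref{eq:E-0} by its Mersenne counterpart, Lemma~\ref{lem:remove-si}. First I would put the estimator in its usual bilinear form. Since $C[t]=\sum_x f_x\, s(x)\,[i(x)=t]$, summing squares over $t$ collapses the two indicators into $[i(x)=i(y)]$, giving
\[ X = F_2 + Y, \qquad Y = \sum_{x\neq y} f_x f_y\, s(x)s(y)\,[i(x)=i(y)], \]
where the sum runs over ordered pairs of distinct (nonzero) keys; the diagonal contributes exactly $F_2$ because $s(x)^2=1$.

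For the two expectation bounds I would evaluate $\E[Y]$ term by term. Applying Lemma~\ref{lem:remove-si} with $A=s(y)[i(x)=i(y)]$ gives $\E[s(x)s(y)[i(x)=i(y)]]=\E[s(y)[i(y)=2^\ell-1]]/p$, since conditioning on $i(x)=2^\ell-1$ turns $[i(x)=i(y)]$ into $[i(y)=2^\ell-1]$ and $h(y)$ is independent of $h(x)$. A direct count over the $b$-bit strings of $[2^b-1]$, split on the sign bit, shows $\E[s(y)[i(y)=2^\ell-1]]=1/p$, so each term equals $1/p^2$ and $\E[X]-F_2=\frac{1}{p^2}\sum_{x\neq y}f_xf_y$. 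The stated estimates then follow from elementary inequalities: $\bigl|\sum_{x\neq y}f_xf_y\bigr|\le(n-1)F_2$ by AM--GM gives $|\E[X]-F_2|\le(n-1)F_2/p^2$, and $\sum_{x\neq y}f_xf_y=F_1^2-F_2\le(n-1)F_2$ (from $F_1^2\le nF_2$) gives $\E[X]<(1+n/p^2)F_2$.

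The variance is where the real work lies. I would write $\Var[X]=\Var[Y]=\sum_{(x,y),(z,w)}\operatorname{Cov}(c_{xy},c_{zw})$ with $c_{xy}=f_xf_y\,s(x)s(y)[i(x)=i(y)]$, and classify quadruples by how $\{x,y\}$ and $\{z,w\}$ overlap. The crucial simplification is that four distinct keys contribute nothing: by $4$-universality $h(x),h(y),h(z),h(w)$ are fully independent, so $c_{xy}$ and $c_{zw}$ are independent and their covariance vanishes. Only two cases remain. When $\{x,y\}=\{z,w\}$ (two distinct keys, two orderings) I get the main term $2\sum_{x\neq y}f_x^2f_y^2\,\Pr[i(x)=i(y)]$ minus a negligible $\E[c_{xy}]^2$ piece; substituting $\Pr[i(x)=i(y)]=(1+(r-1)/p^2)/r$ from \eqref{eq:coll} yields $\frac{2(F_2^2-F_4)}{r}\bigl(1+\frac{r-1}{p^2}\bigr)$, i.e.\ the classical variance inflated by a factor $1+(r-1)/p^2$. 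When exactly three keys are distinct, the shared key's sign squares away and the two surviving odd signs are stripped by two applications of Lemma~\ref{lem:remove-si}, leaving $\E[c_{xy}c_{zw}]=\frac{f_x^2 f_y f_w}{p^2}\Pr[i(\text{shared})=2^\ell-1]$, a quantity of order $1/(rp^2)$.

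The main obstacle is to show that the two positive corrections -- the $(r-1)/p^2$ inflation and the sum of the three-distinct terms -- together fit inside the $2F_4/r$ slack hidden in $\frac{2(F_2^2-F_4)}{r}<\frac{2F_2^2}{r}$. For the inflation I would use $F_2^2-F_4\le(n-1)F_4$ (from $F_2^2\le nF_4$) to rewrite it as $\frac{2F_4}{r}\cdot\frac{(n-1)(r-1)}{p^2}$, and for the three-distinct terms I would bound the sign-weighted sums (replacing $f$ by $|f|$) by $F_2\|f\|_1^2\le nF_2^2\le n^2F_4$, turning them into $\frac{2F_4}{r}\cdot O(n^2/p^2)$. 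Since $n\le u<p$ and $r<u<p$ -- indeed $n,r\le(p+1)/2$ because Algorithm~\ref{alg:Mersenne} needs $p\ge 2u-1$ -- both correction factors are strictly below $1$, so the total stays under $\frac{2F_2^2}{r}$. Keeping these constants honest, so that inflation plus the three-distinct error genuinely fit inside the slack rather than merely being ``small'', is the delicate point; everything else is bookkeeping driven by Lemma~\ref{lem:remove-si}.
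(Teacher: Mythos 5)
Your proposal follows essentially the same route as the paper: the same decomposition $X=F_2+Y$, the same double application of Lemma~\ref{lem:remove-si} giving $\E[s_xs_yf_xf_y[i_x=i_y]]=f_xf_y/p^2$, the same case analysis of the second moment of $Y$ by key overlap, and the same closing arithmetic using the slack $2F_4/r\ge 2F_2^2/(rn)$ together with $n\le u\le(p+1)/2$ and $r\le u/2$. The only real deviations are cosmetic or minor: you use the covariance decomposition so the four-distinct-key terms vanish outright (the paper instead keeps $\E[Y^2]$ and bounds them by $F_1^4/p^4\le n^2F_2^2/p^4$, while you must instead track the subtracted products of means in the three-distinct case), and your closing claim that ``both correction factors are strictly below $1$'' should be that their \emph{sum} is below $1$ --- which the bounds you set up (roughly $\tfrac18(8/7)^2+2(4/7)^2<1$) do in fact deliver, exactly as in the paper's final computation.
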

The difference from \eqref{eq:E-F2} and \eqref{eq:V-F2}
is negligible when $p$ is large. Theorem \ref{thm:h-and-s-p} will be
proved in Section \ref{sec:analysis-two-for-one}.

Recall our discussion from the end of Section
\ref{sec:power-of-two}. If we instead had used the $b$-bit prime
$p=2^{b-1}+1$, then the sign-bit $a_x$ would be extremely biased with
$\Pr[a_x=0]=1-1/p$ while $\Pr[a_x=1]=1/p$, leading to extremely poor
performance.

\subsection{An arbitrary number of buckets}\label{sec:most-uniform}
We now consider the general case where we want to hash into a set of
buckets $R$ whose size is not a power of two.  Suppose we have a
$2$-universal hash function $h:U\to Q$.  We will compose $h$ with a
map $\mu:Q\to R$, and use $\mu\circ h$ as a hash function from $U$ to
$R$.  Let $q=|Q|$ and $r=|R|$.  We want the map $\mu$ to be \emph{most
	uniform} in the sense that for bucket $i\in R$, the number of
elements from $Q$ mapping to $i$ is either $\floor{q/r}$ or
$\ceil{q/r}$.  Then the uniformity of hash values with $h$ implies for
any key $x$ and bucket $i\in R$ \[\floor{q/r}/q\leq
	\Pr[\mu(h(x))=i]\leq \ceil{q/r}/q.\] Below we typically have $Q=[q]$
and $R=[r]$.  A standard example of a most uniform map $\mu:[q]\to[r]$
is $\mu(x)=x\bmod r$ which the one used above when we defined
$h^r:[u]\to[r]$, but as we mentioned before, the modulo operation is
quite slow unless $r$ is a power of two.

Another example of a most uniform map $\mu:[q]\to[r]$
is $\mu(x)=\floor{xr/q}$,
which is also quite slow in general, but if $q=2^b$ is a power of two,
it can be implemented as $\mu(x)=(xr)\rs\,b$ where
$\rs$ denotes right-shift. This would be yet another advantage
of having $k$-universal hashing into $[2^b]$.

Now, our interest is the case where $q$ is a Mersenne prime $p=2^b-1$. We want
an efficient and most uniform map $\mu:[2^b-1]$ into any given $[r]$.
Our simple solution is to define
\begin{equation}\label{eq:most-uniform}
	\mu(v)=\floor{(v+1)r/2^b}=((v+1)r)\rs b.
\end{equation}
Lemma \ref{lem:most-uniform} (iii) below
states that \eqref{eq:most-uniform} indeed
gives a most uniform map.
\begin{lemma}\label{lem:most-uniform} Let $r$ and $b$ be positive integers.
	Then
	\begin{itemize}
		\item[(i)] $v\mapsto (vr)\rs\,b$ is a most
		      uniform map from $[2^b]$ to $[r]$.
		\item[(ii)] $v\mapsto (vr)\rs\,b$ is a most
		      uniform map from $[2^b]\setminus\{0\}=\{1,\ldots,2^b-1\}$ to $[r]$.
		\item[(iii)] $v\mapsto ((v+1)r)\rs \, b$ is a most
		      uniform map from $[2^b-1]$ to $[r]$.
	\end{itemize}
\end{lemma}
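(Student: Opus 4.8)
The plan is to reduce all three parts to a single elementary counting fact and then handle the two modified domains by tracking how they perturb the clean picture of part~(i). Throughout I use that $(vr)\rs b=\floor{vr/2^b}$, so every map here is of the form $v\mapsto\floor{vr/2^b}$ (with $v$ shifted by one in part~(iii)). The engine is the observation that for reals $\alpha\le\beta$ the number of integers in the half-open interval $[\alpha,\beta)$ equals $\ceil{\beta}-\ceil{\alpha}$, and that this count always lies in $\{\floor{\beta-\alpha},\ceil{\beta-\alpha}\}$; that is, a half-open interval of length $L$ contains either $\floor{L}$ or $\ceil{L}$ integers.

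For part~(i) I would note that $\floor{vr/2^b}=i$ holds exactly when $i\,2^b/r\le v<(i+1)2^b/r$, so the preimage of bucket $i$ is the set of integers in an interval of length exactly $2^b/r=q/r$, and for $0\le i<r$ this interval sits inside $[0,2^b)$, so no element of $[2^b]$ is lost at the endpoints. The counting fact then gives each preimage size in $\{\floor{q/r},\ceil{q/r}\}$, which is exactly most uniformity. I would also record the sharper fact that bucket $0$ has preimage $\{0,1,\dots,\ceil{q/r}-1\}$, hence \emph{exactly} $\ceil{q/r}$ elements, the maximum possible load; this is the hook for part~(ii).

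For part~(ii) the only change is deleting $v=0$, which lies solely in bucket $0$ (since $\floor{0}=0$), so every bucket $i\ge 1$ keeps its part~(i) size in $\{\floor{q/r},\ceil{q/r}\}$ while bucket $0$ drops to $\ceil{q/r}-1$. It then remains to check that all of these values lie in the new target window $\{\floor{(q-1)/r},\ceil{(q-1)/r}\}$ with $q=2^b$. I expect this comparison of the floors and ceilings of $q/r$ versus $(q-1)/r$ to be the main obstacle, and it splits into cases according to whether $r\mid 2^b$, $r\mid 2^b-1$, or neither. The delicate case is $r\mid 2^b-1$ with $r\nmid 2^b$: here $(q-1)/r$ is an integer, so most uniformity \emph{demands} that every bucket have exactly $(q-1)/r$ elements. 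This is where the recorded fact pays off: since the total load is $q=r\floor{q/r}+1$, exactly one bucket is overfull, and because bucket $0$ already carries the maximum $\ceil{q/r}=\floor{q/r}+1$, it \emph{is} that unique overfull bucket; deleting $v=0$ therefore levels every bucket to $\floor{q/r}=(q-1)/r$. The remaining cases ($r\mid 2^b$, and $r\nmid 2^b$ with $r\nmid 2^b-1$) are routine once the target window is written out.

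Finally, part~(iii) follows from part~(ii) with no new work: the substitution $w=v+1$ is a bijection from $[2^b-1]=\{0,\dots,2^b-2\}$ onto $\{1,\dots,2^b-1\}$ under which $((v+1)r)\rs b=\floor{wr/2^b}$, so the map in~(iii) has exactly the same multiset of preimage sizes as the map in~(ii), which is most uniform.
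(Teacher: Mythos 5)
Your proposal is correct and follows essentially the same route as the paper's own proof: part (iii) is reduced to part (ii) by the bijection $v\mapsto v+1$, part (i) is the standard interval-counting fact, and part (ii) is obtained by deleting $v=0$ from bucket $0$, which carries the maximal load $\ceil{2^b/r}$, with the case $r\mid 2^b-1$ settled by the same total-load counting argument the paper uses. The only cosmetic difference is that the paper packages the bookkeeping via the identity $\ceil{(q+1)/r}-1=\floor{q/r}$ rather than an explicit case split on divisibility.
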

\begin{proof}
	Trivially (ii) implies (iii).
	The statement (i) is folklore and easy to prove, so we know that every
	$i\in[r]$ gets hit by $\floor {2^b/r}$ or $\ceil{2^b/r}$ elements from
	$[2^b]$. It is also clear that $\ceil{2^b/r}$ elements, including $0$,
	map to $0$. To prove (ii), we remove $0$ from $[2^b]$,
	implying that only
	$\ceil{2^b/r}-1$ elements map to $0$. For all positive integers $q$
	and $r$, $\ceil{(q+1)/r}-1=\floor{q/r}$, and we use this here with
	$q=2^b-1$. It follows that all buckets from $[r]$ get $\floor{q/r}$
	or $\floor{q/r}+1$ elements from $Q=\{1,\ldots,q\}$. If $r$ does
	not divide $q$ then $\floor{q/r}+1=\ceil{q/r}$, as desired. However,
	if $r$ divides $q$, then $\floor{q/r}=q/r$, and this
	is the least number of elements from $Q$ hitting any bucket in $[r]$. Then
	no bucket from $[r]$ can get hit by more than $q/r=\ceil{q/r}$
	elements from $Q$. This completes the proof of (ii), and hence of (iii).
\end{proof}
We note that our trick does not work when $q=2^b-c$ for $c\geq 2$, that is,
using $v\mapsto ((v+c)r)\rs  b$, for in this general case,
the number of elements hashing to $0$ is $\ceil {2^b/r}-c$, or $0$ if
$c\geq \floor {2^b/r}$.
One may try many other hash functions $(c_1 v r+ c_2 v+ c_3 r + c_4) \rs b$ similarly without any luck.
Our new uniform map from \eqref{eq:most-uniform} is thus very specific to Mersenne prime fields.


\subsection{Division and Modulo with (Pseudo) Mersenne Primes}\label{subsec:intro-division}

We now describe a new algorithm for truncated division with Mersenne primes, and more generalized numbers on the form $2^b-c$.
We show this implies a fast branch-free computation of $\bmod\,p$ for
Mersenne primes $p=2^b-1$.
An annoyance in Algorithm \ref{alg:Mersenne}
is that the if-statement at the end can be slow in case of branch mis-predictions.
This method solves that issue.

More specifically, in Algorithm \ref{alg:Mersenne}, after the last
multiplication, we have a number $y<p^2$ and we want to compute the
final hash value $y\bmod p$. We obtained this using the following
statements, each of which preserves the value modulo $p$, starting from
$y<p^2$:
\begin{algorithmic}
	\State $y \gets (y\andtt p)+(y\rs b)$
	\Comment $y<2p$
	\If{$y\ge p$}
	\State $y\gets y-p$
	\Comment  $y<p$
	\EndIf
\end{algorithmic}
To avoid the if-statement, in Algorithm \ref{alg:div-simple}, we suggest
a branch-free code that starting
from $v<2^{2b}$ computes both $y=v\bmod p$ and $z=\floor{v/p}$ using
a small number of AC$^0$ instructions.
\begin{algorithm}[H]
	\caption{For Mersenne prime $p=2^b-1$ and $v< 2^{2b}$, compute
		\label{alg:div-simple}
		$y=v\bmod p$ and $z=\floor{v/p}$}
	\begin{algorithmic}
		\State $\rhd$ First we compute $z=\floor{v/p}$
		\State $v'=v+1$
		\State $z \gets(( v' \rs b)+v')\rs b$
		\State $\rhd$ Next we compute $y=v\bmod p$ given $z=\floor{v/p}$
		\State $y \gets (v + z) \andtt p $
	\end{algorithmic}
\end{algorithm}
In Algorithm \ref{alg:div-simple}, we use
$z=\floor{v/p}$ to compute $y=v\bmod p$. If we only want the
division $z=\floor{v/p}$, then we can skip the last statement.

Below we will generalize Algorithm \ref{alg:div-simple} to work for
arbitrary $v$, not only $v<2^{2b}$. Moreover, we will generalize
to work for different kinds of primes generalizing Mersenne primes:
\begin{description}
	\item[Pseudo-Mersenne Primes]
	      are primes of the form $2^b-c$, where is usually required that $c < 2^{\lfloor b/2\rfloor}$~\cite{van2014encyclopedia}.
	      Crandal patented a method for working with Pseudo-Mersenne Primes in 1992~\cite{crandall1992method},
	      why those primes are also sometimes called ``Crandal-primes''.
	      The method was formalized and extended by Jaewook Chung and Anwar Hasan in 2003~\cite{chung2003more}. The method we present is simpler with
	      stronger guarantees and better practical performance.
	      We provide a comparison with the Crandal-Chung-Hansan method in Section 4.
	\item[Generalized Mersenne Primes]
	      also sometimes known as Solinas primes~\cite{Solinas2011}, are sparse numbers, that is $f(2^b)$ where $f(x)$ is a low-degree polynomial.
	      Examples from the Internet Research Task Force's document ``Elliptic Curves for Security''~\cite{rfc7748}:
	      $p_{25519} = 2^{255} - 19$
	      and
	      $p_{448} = 2^{448}-2^{224}-1$.
	      We simply note that Solinas primes form a special case of
	      Pseudo-Mersenne Primes, where multiplication with $c$
	      can be done using a few shifts and additions.
\end{description}
We will now first generalize the division from Algorithm \ref{alg:div-simple} to cover arbitrary $v$ and division with an arbitrary Pseudo-Mersenne primes $p=2^b-c$.
This is done in Algorithm \ref{alg:division-generalized} below which
works also if $p=2^b-c$ is not a prime.  The
simple division in Algorithm \ref{alg:div-simple} corresponds to the case
where $c=1$ and $m=2$.
\begin{algorithm}[H]
	\caption{Given integers $p=2^b-c$ and $m$.
		For any $v< (2^b/c)^m$, compute $z=\floor{v/p}$}
	\label{alg:division-generalized}
	\begin{algorithmic}
		\State $v' \gets v + c$
		\State $z \gets v' \rs b$
		\For{ $m-1$ times}
		\State $z \gets (z * c + v')\rs b$
		\EndFor
	\end{algorithmic}
\end{algorithm}
The proof that Algorithm \ref{alg:division-generalized} correctly computes
$z=\floor{v/p}$ is provided in Section \ref{sec:division}.
Note that $m$ can be computed in advance from $p$, and there is no requirement that it is chosen as small as possible.
For Mersenne and Solinas primes, the multiplication $z*c$ can be done very fast.

Mathematically the algorithm computes the nested division
$$
	\bbfloor{\frac{v}{q-c}}
	=
	\bbfloor{\frac{
			\bfloor{\frac{
					\floor{\frac{
							\dots+v+c
						}{q}}c +v+c
				}{q}}c +v+c
		}{q}}
	\vspace{-1em} 
$$
which is visually similar to the series expansion
$
	\frac{v}{q-c}
	= \frac{v}{q}\sum_{i=0}^\infty (\frac{c}{q})^i
	= \frac{\frac{\frac{\dots+v}{q}c+v}{q}c+v}{q}.
$
It is natural to truncate this after $m$ steps for a $(c/q)^m$ approximation.
The less intuitive part is that we need to add $v+c$ rather than $v$ at each step, to compensate for rounding down the intermediate divisions.

\paragraph{Computing mod}
We will now compute the $\bmod$ operation assuming that
we have already computed $z=\floor{v/p}$. Then
\begin{align}
	v \bmod p
	= v - pz
	= v - (2^b-c)z
	= v - (z\ls b) - c*z,
\end{align}
which is only two additions, a shift, and a multiplication with $c$ on top of the division algorithm.
As $pz = \floor{v/p}p \le v$ there is no danger of overflow.
We can save one operation by noting
that if $v = z (2^b-c) + y$, then
$$v\bmod p = y=\left(v+c*z \right) \bmod 2^b.$$
This is the method presented in Algorithm \ref{alg:mod-generalized} and applied with $c=1$ in Algorithm \ref{alg:div-simple}.
\begin{algorithm}[H]
	\caption{For integers $p=2^b-c$ and $z=\floor{v/p}$ compute
		$y=v \bmod p$.}
	\label{alg:mod-generalized}
	\begin{algorithmic}
		\State $y \gets (v + z*c) \andtt (2^b-1)$
	\end{algorithmic}
\end{algorithm}


 \paragraph{Applications to an arbitrary number of buckets}
 In Subsection~\ref{sec:most-uniform} we discussed how $\floor{\frac{h(x)r}{2^b-1}}$ provides a most uniform map from $[2^b-1]\to[r]$.
 To avoid the division step, we instead considered the map
 $\floor{\frac{(h(x)+1)r}{2^b}}$.
 However, for primes of the form $2^b-c$, $c>1$ this approach doesn't provide a most-uniform map.
 Instead, we may use Algorithm \ref{alg:division-generalized} to compute
 $$\left\lfloor\frac{h(x)r}{2^b-c}\right\rfloor$$
 directly, getting a perfect most-uniform map.
 
 \paragraph{Application to Finger Printing}
 A classical idea by Rabin~\cite{rabin1981fingerprinting} is to test the equality of two large numbers by comparing their value modulo some random primes (or random irreducible polynomials in a Gallois Field.)
 A beautiful example of this is King and Sagert's Algorithm for Maintaining the Transitive Closure.~\cite{DBLP:journals/jcss/KingS02}
 For such applications
 we need a reasonably large set of random primes to choose from.
 The generalized Mersenne primes with $c$ up to $2^{b(1-\eps)}$ are a good candidate set, which from the prime number theorem we expect to contain $\approx 2^{b(1-\eps)}/b$ primes.
 Each application of \cref{alg:division-generalized} reduces $y$ by a factor $2^{-\eps b}$,
 so computing the quotient and remainder takes just $m=1/\eps$ steps.




\section{Analysis of second moment estimation using Mersenne primes}
\label{sec:analysis-two-for-one}
In this section, we will prove Theorem \ref{thm:h-and-s-p}---that a single Mersenne hash function works for Count Sketch.
Recall that for each key $x\in [u]$, we have a value $f_x\in \Z$, and the
goal was to estimate the second moment $F_2 = \sum_{x\in u}f_x^2$.

We had two functions $i:[u]\to[r]$ and $s:[u]\to\{-1,1\}$. 
For notational convenience, we define $i_x=i(x)$ and $s_x=s(x)$.
We let $r=2^\ell>1$ and $u>r$ both be powers of two and $p=2^b-1>u$ a Mersenne prime.
For each $i\in [r]$, we have a counter 
$C_i=\sum_{x\in[u]} s_x f_x[i_x=i]$, and we define the 
estimator $X=\sum_{i\in[r]} C_i^2$. We want to study how
well it approximates $F_2$.
We have 
\begin{align}
X=\sum_{i\in[r]}\left( \sum_{x\in[u]}s_x f_x[i_x=i]\right)^2
=\sum_{x,y\in[u]}s_x s_y f_x f_y[i_x=i_y]
=\sum_{x\in[u]} f_x^2+Y,
\label{eq:decomp}
\end{align}
where $Y=\sum_{x,y\in[u],x\neq y} s_x s_y f_x f_y [i_x = i_y]$.
The goal is thus to bound mean and variance of the error $Y$.

As discussed in the introduction, one of the critical steps in the analysis of count sketch in the classical case is \cref{eq:E-0}.
We formalize this into the following property:
\begin{property}[Sign Cancellation]\label{prop:independence}
    For distinct keys $x_0, \ldots x_{j - 1}$, $j \le k$
    and an expression $A(i_{x_0}, \ldots, i_{x_{j - 1}}, s_{x_1}, \ldots, s_{x_{j - 1}})$,
    which depends on $i_{x_0}, \ldots, i_{x_{j - 1}}$ and $s_{x_1}, \ldots, s_{x_{j - 1}}$
    but not on $s_{x_0}$
    \begin{align}
        \E[s_{x_0} A(i_{x_0}, \ldots, i_{x_{j - 1}}, s_{x_1}, \ldots, s_{x_{j - 1}})] = 0\; .
    \end{align}
\end{property}

In the case where we use a Mersenne prime for our hash function we have that $h$ is uniform in $[2^b - 1]$ and not in $[2^b]$, hence \Cref{prop:independence} is not satisfied.
Instead, we have \cref{eq:E-0} which is almost as good, and will replace \Cref{prop:independence} in the analysis for count sketch.
We formalize this as follows:
\begin{property}[Sign Near Cancellation]\label{prop:near-independence}
    Given $k, p$ and $\delta$,
    there exists $t \in [r]$ such that for distinct keys $x_0, \ldots x_{j - 1}$, $j \le k$
    and an expression $A(i_{x_0}, \ldots, i_{x_{j - 1}}, s_{x_1}, \dots, s_{x_{j - 1}})$,
    which depends on $i_{x_0}, \ldots, i_{x_{j - 1}}$
    and $s_{x_1}, \ldots, s_{x_{j - 1}}$,
    but not on $s_{x_0}$,
    \begin{align}
        \E[s_{x_0} A(i_{x_0}, \ldots, i_{x_{j - 1}}, s_{x_1}, \ldots, s_{x_{j - 1}})]
            &= \frac1p \E[A(i_{x_0}, \ldots, i_{x_{j - 1}}, s_{x_1}, \ldots, s_{x_{j - 1}}) \mid i_{x_0} = t].
         \label{eq:near-independence}
            \\
            \text{and}\quad
    \Pr[i_x = t] &\le (1 + \delta)/r
    \quad\text{for any key $x$}.
         \label{eq:prob-special-value}
    \end{align}
\end{property}

When the hash function $h$ is not uniform then it is not guaranteed that
the collision probability is $1/r$, but \eqref{eq:coll} showed that for
Mersenne primes the collision probability is $(1 + (r - 1)/p^2)/r$.
We formalize this into the following property.
\begin{property}[Low Collisions]\label{prop:collision}
   We say the hash function has $(1+\eps)/r$-low collision probability, if
    for distinct keys $x \neq y$,
    \begin{align}\label{eq:collision}
        \Pr[i_x = i_y] \le (1 + \eps)/r\; .
    \end{align}
\end{property}

\subsection{The analysis in the classical case}
First, as a warm-up for later comparison, we analyse the
case where we have Sign Cancellation, but
the collision probability bound is only $(1+\eps)/r$.
This will come in useful in \Cref{sec:arbitrary-buckets} where we will consider the case of an arbitrary number of buckets, not necessarily a power of two.
\begin{lemma}\label{lem:count-classic}
   If the hash function has Sign Cancellation for $k = 4$ and $(1+\eps)/r$-low collision probability, then
    \begin{align}
        \E[X] &= F_2 \\
        \Var[X] &\le 2(1 + \eps)(F_2^2 - F_4)/r \le 2(1 + \eps)F_2^2/r .
    \end{align}
\end{lemma}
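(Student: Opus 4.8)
The plan is to work from the decomposition in \eqref{eq:decomp}, $X = F_2 + Y$ with $Y = \sum_{x \neq y} s_x s_y f_x f_y [i_x = i_y]$, and to establish the two claims by computing $\E[Y]$ and $\E[Y^2]$ directly. Since $F_2$ is a deterministic constant, we have $\E[X] = F_2 + \E[Y]$ and $\Var[X] = \Var[Y] = \E[Y^2] - (\E[Y])^2$, so once I show $\E[Y] = 0$ the variance reduces to $\E[Y^2]$.

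For the mean, I would fix a pair of distinct keys $x \neq y$ and view the summand as $s_x A$, where $A = s_y f_x f_y [i_x = i_y]$ depends on $i_x, i_y$ and $s_y$ but not on $s_x$. Sign Cancellation (with $j = 2 \le k = 4$) then gives $\E[s_x A] = 0$, so every term of $Y$ has mean zero and $\E[Y] = 0$, hence $\E[X] = F_2$.

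For the variance I would expand
\[
\E[Y^2] = \sum_{\substack{x \neq y \\ z \neq w}} f_x f_y f_z f_w \, \E\big[s_x s_y s_z s_w [i_x = i_y][i_z = i_w]\big].
\]
The heart of the argument is to show that a term survives only when the multiset $\{x,y\}$ equals $\{z,w\}$. If instead some key --- say $x$ --- is distinct from each of $y, z, w$, then $s_x$ occurs to the first power, and writing the rest of the summand as an expression $A$ in the remaining $i$- and $s$-variables (which does not involve $s_x$), Sign Cancellation forces the expectation to vanish. Since $s_v^2 = 1$, the surviving configurations are exactly those in which every key that appears does so an even number of times, i.e. $(z,w) \in \{(x,y), (y,x)\}$, and both orientations contribute $f_x^2 f_y^2 [i_x = i_y]$. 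Summing the two surviving orientations gives $\E[Y^2] = 2 \sum_{x \neq y} f_x^2 f_y^2 \, \Pr[i_x = i_y]$, and applying the Low Collisions bound $\Pr[i_x = i_y] \le (1+\eps)/r$ together with the identity $\sum_{x \neq y} f_x^2 f_y^2 = F_2^2 - F_4$ yields $\Var[X] = \E[Y^2] \le 2(1+\eps)(F_2^2 - F_4)/r \le 2(1+\eps)F_2^2/r$.

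The main obstacle is the case analysis classifying the four keys $x, y, z, w$: I must check every way the four (not necessarily distinct) keys can coincide and, in each non-diagonal case, correctly identify a key whose sign appears an odd number of times and verify that the remaining factor is a legitimate $A$-expression not depending on that sign, so that Sign Cancellation applies. This is precisely where the $k = 4$ strength is needed, since up to four distinct keys can appear simultaneously in $\E[Y^2]$.
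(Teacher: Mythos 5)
Your proposal is correct and follows essentially the same route as the paper's proof: decompose $X=F_2+Y$, kill $\E[Y]$ and all off-diagonal terms of $\E[Y^2]$ via Sign Cancellation applied to a key whose sign appears to the first power, and bound the two surviving orientations $(x,y)=(x',y')$ or $(y',x')$ using the low-collision probability to get $2(1+\eps)(F_2^2-F_4)/r$. No gaps.
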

\begin{proof}
   Recall the decomposition $X=F_2+Y$ from \cref{eq:decomp}.
    We will first show that $\E[Y] = 0$.
    By \Cref{prop:independence} we have that $\E[s_x s_y f_x f_y [i_x = i_y]] = 0$
    for $x \neq y$ and thus $\E[Y] = \sum_{x,y\in[u],x\neq y} \E[s_x s_y f_x f_y [i_x = i_y]] = 0$.

    Now we want to bound the variance of $X$. We note that since $\E[Y] = 0$ and $X = F_2 + Y$
    \begin{align*}
        \Var[X] = \Var[Y] = \E[Y^2]
            = \sum_{\substack{x, y, x', y' \in [u]\\ x \neq y, x' \neq y'}} \E[(s_x s_y f_x f_y [i_x = i_y])(s_{x'} s_{y'} f_{x'} f_{y'} [i_{x'} = i_{y'}])] .
    \end{align*}
    Now we consider one of the terms $\E[(s_x s_y f_x f_y [i_x = i_y])(s_{x'} s_{y'} f_{x'} f_{y'} [i_{x'} = i_{y'}])]$.
    Suppose that one of the keys, say $x$, is unique, i.e. $x \not\in \{y, x', y'\}$.
    Then the Sign Cancellation Property implies that 
    \[
        \E[(s_x s_y f_x f_y [i_x = i_y])(s_{x'} s_{y'} f_{x'} f_{y'} [i_{x'} = i_{y'}])] = 0 .
    \]
    Thus we can now assume that there are no unique keys. Since $x \neq y$ and $x' \neq y'$, we conclude
    that $(x, y) = (x', y')$ or $(x, y) = (y', x')$. Therefore
    \begin{align*}
       \Var[X] &= \sum_{\substack{x, y, x', y' \in [u]\\ x \neq y, x' \neq y'}}
                \E[(s_x s_y f_x f_y [i_x = i_y])(s_{x'} s_{y'} f_{x'} f_{y'} [i_{x'} = i_{y'}])]
            \\&= 2 \sum_{\substack{x, y, x', y' \in [u]\\ x \neq y, (x', y') = (x, y)}}
                \E[(s_x s_y f_x f_y [i_x = i_y])(s_{x'} s_{y'} f_{x'} f_{y'} [i_{x'} = i_{y'}])]
            \\&= 2\sum_{x,y\in[u],x\neq y} \E[(s_x s_y f_x f_y[i_x=i_y])^2]
            \\&= 2\sum_{x,y\in[u],x\neq y} \E[(f_x^2f_y^2[i_x=i_y])]
            \\&\le 2\sum_{x,y\in[u],x\neq y} (f_x^2f_y^2)(1 + \eps)/r
            \\&= 2(1 + \eps) (F_2^2-F_4)/r.
    \end{align*}
    The inequality follows by \Cref{prop:collision}.
\end{proof}



\subsection{The analysis of two-for-one using Mersenne primes}
We will now analyse the case where the functions $s : [u] \to \{-1, 1\}$
and $i : [u] \to [2^l]$ are constructed as in \Cref{alg:Mersenne} from a
single $k$-universal hash function $h : [u] \to [2^b - 1]$ where $2^b - 1$
is a Mersenne prime.
We now only have \nameref{prop:near-independence}.
We will show that this does
not change the expectation and variance too much. Similarly, to the
analysis of the classical case, we will analyse a slightly more general
problem, which will be useful in \Cref{sec:arbitrary-buckets}.
\begin{lemma}\label{lem:count-mersenne}
   If we have \nameref{prop:near-independence} with $
    \Pr[i_x = t] \le (1 + \delta)/r$
   and $(1+\eps)/r$-low collision probability,
   then
    \begin{align}
        \E[X] &= F_2 + (F_1^2 - F_2)/p^2 \\
        | \E[X] - F_2 | &\le F_2 (n - 1)/p^2 \\
        \Var[X] &\le 2F_2^2/r + F_2^2 (2\eps/r + 4(1 + \delta)n / (rp^2) + n^2/p^4 - 2 /(rn))
    \end{align}
\end{lemma}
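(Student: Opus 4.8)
The plan is to work from the decomposition $X=F_2+Y$ of \cref{eq:decomp}, writing $Y=\sum_{x\neq y}Z_{xy}$ with $Z_{xy}=s_xs_yf_xf_y[i_x=i_y]$, and to control $\E[Y]$ and $\E[Y^2]$ by substituting \nameref{prop:near-independence} for the exact cancellation used in \Cref{lem:count-classic}. Since every term of $Y$ and $Y^2$ involves at most $4$ distinct keys and $k=4$, each use of \cref{eq:near-independence} is legitimate; throughout I will exploit that the hash values of distinct keys are independent, so conditioning on $i_{x_0}=t$ merely substitutes that value into the surviving factors.

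For the mean I would write $\E[Y]=\sum_{x\neq y}f_xf_y\,\E[s_xs_y[i_x=i_y]]$ and evaluate each inner expectation by applying \cref{eq:near-independence} twice: first with $x_0=x$ (legal, as $s_y[i_x=i_y]$ does not involve $s_x$), which collapses $[i_x=i_y]$ to $[i_y=t]$ under $i_x=t$; then with $x_0=y$, leaving $\tfrac1p\E[[i_y=t]\mid i_y=t]=\tfrac1p$. Each term is thus $1/p^2$, so $\E[Y]=(F_1^2-F_2)/p^2$, which is the first claim. The bound $|\E[X]-F_2|\le F_2(n-1)/p^2$ then follows from $0\le F_1^2\le nF_2$ (Cauchy--Schwarz over the $n$ nonzero coordinates), giving $|F_1^2-F_2|\le(n-1)F_2$.

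For the variance I would use $\Var[X]=\Var[Y]\le\E[Y^2]=\sum_{x\neq y,\,x'\neq y'}\E[Z_{xy}Z_{x'y'}]$ and split by the coincidence pattern of $\{x,y\}$ and $\{x',y'\}$. In the matched case $\{x,y\}=\{x',y'\}$ (two orderings) all signs square away and the term is $f_x^2f_y^2\Pr[i_x=i_y]$, which \Cref{prop:collision} bounds by $(1+\eps)/r$; summing gives $2(1+\eps)(F_2^2-F_4)/r$, and invoking $F_4\ge F_2^2/n$ converts this into $2F_2^2/r+2\eps F_2^2/r-2F_2^2/(rn)$, producing the $2/r$, $2\eps/r$, and $-2/(rn)$ contributions simultaneously. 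In the one-shared-key case (three distinct keys, four symmetric subcases) the doubled key $w$ contributes $f_w^2$ while the two unique signs are removed by two applications of \cref{eq:near-independence}, leaving $\tfrac1{p^2}f_w^2f_yf_{y'}\Pr[i_w=t]$; bounding $\Pr[i_w=t]\le(1+\delta)/r$ and the signed sum $\sum_{y,y'}f_yf_{y'}=(F_1-f_w)^2-(F_2-f_w^2)$ in absolute value by $(n-1)F_2$ (Cauchy--Schwarz again) yields at most $4(1+\delta)nF_2^2/(rp^2)$. Finally, when all four keys are distinct, four applications of \cref{eq:near-independence} make each term exactly $\tfrac1{p^4}f_xf_yf_{x'}f_{y'}$, and summing in absolute value via $\big(\sum_x|f_x|\big)^2\le nF_2$ contributes at most $n^2F_2^2/p^4$. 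Adding the three contributions gives the stated bound.

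The main obstacle I anticipate is the disciplined bookkeeping of the variance step rather than any single hard estimate: justifying each repeated use of \nameref{prop:near-independence} (checking at every stage that the peeled-off factor is independent of the sign being cancelled, and that the conditioning is a pure substitution because the other keys' hashes are independent), and carefully handling the \emph{signs} of the $f_x$ so that the doubled key is always retained as $f_w^2\ge0$ while the signed sums over the unique keys are tamed through $F_1^2\le nF_2$ and $F_4\ge F_2^2/n$. Tracking which lower-order corrections (the various $-1/p^4$ terms and $-\E[Y]^2\le 0$) may be discarded as nonpositive is precisely what keeps the final expression as clean as claimed.
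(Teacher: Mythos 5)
Your proposal is correct and follows essentially the same route as the paper's own proof: the same decomposition $X=F_2+Y$, the same double application of \nameref{prop:near-independence} to evaluate $\E[s_xs_y[i_x=i_y]]=1/p^2$, and the same three-way case split of $\E[Y^2]$ (two matched pairs, one shared key, all distinct) with the same Cauchy--Schwarz bounds $F_1^2\le nF_2$ and $F_4\ge F_2^2/n$. Your handling of the signed sum in the one-shared-key case is in fact slightly more careful than the paper's, which simply bounds the sum over distinct triples by $F_1^2F_2$.
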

\begin{proof}
    We first bound $\E[s_x s_y f_x f_y [i_x = i_y]]$ for distinct keys
    $x \neq y$.
    Let $t$ be the special index given by Sign Near Independence.
    Using \cref{eq:near-independence} twice we get that
    \begin{equation}\begin{split}\label{eq:twice-split}
        \E[s_x s_y f_x f_y [i_x = i_y]]
            &= \E[s_x f_x f_y [i_x = i_y] \mid i_y = t]/p
            \\&= \E[s_x f_x f_y [i_x = t]]/p
            \\&= \E[f_x f_y [i_x = t] \mid i_x = t]/p
            \\&= f_x f_y / p^2 \; .
    \end{split}\end{equation}
    From this, we can calculate $\E[X]$.
    \begin{align*}
        \E[X]
            = F_2 + \sum_{x \neq y} \E[s_x s_y f_x f_y [i_x = i_y]]
            = F_2 + (F_1^2 - F_2)/p^2 .
    \end{align*}
    Now we note that $0 \le F_1^2 \le n F_2$ by Cauchy-Schwarz, hence we get that
    $| \E[X] - F_2 | \le (n - 1)/p^2$.

    The same method is applied to the analysis of the variance, which is
    \[
        \Var[X]
            = \Var[Y]
            \le \E[Y^2]
            = \sum_{x,y,x',y' \in [u], x \neq y, x' \neq y'} \E[(s_x s_y f_x f_y [i_x = i_y]) (s_{x'} s_{y'} f_{x'} f_{y'}[i_{x'} = i_{y'}])]
        \; .
    \] 
    Consider any term in the sum. Suppose some key, say $x$, is unique in the
    sense that $x \not \in \{y,x',y'\}$. Then we can apply \cref{eq:near-independence}.
    Given that $x \neq y$ and $x'\neq y'$, we have either $2$ or $4$ such unique keys.
    If all 4 keys are distinct, as in \cref{eq:twice-split}, we get
    \begin{align*}
        \E[(s_x s_y f_x f_y [i_x = i_y]) &(s_{x'} s_{y'} f_{x'} f_{y'}[i_{x'} = i_{y'}])]
            \\&= \E[(s_x s_y f_x f_y [i_x = i_y])] \E[s_{x'} s_{y'} f_{x'} f_{y'}[i_{x'} = i_{y'}])]
            \\&= (f_x f_y/p^2)(f_{x'} f_{y'}/p^2)
            \\&= f_x f_y f_{x'} f_{y'}/p^4
        \; .
    \end{align*}
    The expected sum over all such terms is thus bounded
    as 
    \begin{equation}\begin{split}
        \sum_{{\rm distinct}\, x,y,x',y'\in[u]}& \E[(s_x s_y f_x f_y [i_x = i_y]) (s_{x'} s_{y'} f_{x'} f_{y'}[i_{x'} = i_{y'}])]
            \\&= \sum_{{\rm distinct}\,x,y,x',y'\in[u]} f_xf_yf_{x'}f_{y'}/p^4
            \\&\le F_1^4 /p^4
            \\&\le F_2^2 n^2/p^4.\label{eq:distinct}
    \end{split}\end{equation}
    Where the last inequality used Cauchy-Schwarz. We also have to consider all the cases with
    two unique keys, e.g., $x$ and $x'$ unique while $y=y'$. Then using \cref{eq:near-independence}
    and \cref{eq:prob-special-value}, we get
    \begin{align*}
        \E[(s_x s_y f_x f_y [i_x = i_y]) &(s_{x'} s_{y'} f_{x'} f_{y'}[i_{x'} = i_{y'}])]
            \\&= f_x f_{x'} f_y^2 \E[s_x s_{x'} [i_x = i_{x'} = i_y]]
            \\&= f_x f_{x'} f_y^2 \E[s_{x'} [t = i_{x'} = i_y]]/p
            \\&= f_x f_{x'} f_y^2 \E[t = i_y]/p^2
            \\&\le f_x f_{x'} f_y^2(1 + \delta)/(rp^2).
    \end{align*}    
    Summing over all terms with $x$ and $x'$ unique while $y=y'$, and
    using Cauchy-Schwarz and $u\leq p$, we get 
    \begin{align*}
        \sum_{{\rm distinct}\,x,x',y} f_xf_{x'}f_y^2 (1 + \delta) /(rp^2) 
            \le F_1^2 F_2 (1 + \delta)/(rp^2)
            \le F_2^2 n(1 + \delta)/(rp^2).
    \end{align*}
    There are four ways we can pick the two unique keys $a\in \{x,y\}$
    and $b\in \{x',y'\}$, so we conclude that
    \begin{equation}\label{eq:one-pair}
        \sum_{\substack{
            x,y,x',y'\in[u], x\neq y, x'\neq y',\\
            (x,y)=(x',y')\,\vee\,(x,y)=(y',x')
        }}
        \E[(s_x s_y f_x f_y [i_x = i_y]) (s_{x'} s_{y'} f_{x'} f_{y'}[i_{x'} = i_{y'}])]
            \le 4 F_2^2 n(1 + \delta)/(rp^2) .
    \end{equation}
    Finally, we need to reconsider the terms with two pairs, that
    is where $(x,y)=(x',y')$ or $(x,y)=(y',x')$. In
    this case, $(s_x s_y f_x f_y [i_x = i_y]) (s_{x'} s_{y'} f_{x'} f_{y'}[i_{x'} = i_{y'}]) = f_x^2 f_y^2 [i_x = i_y]$.
    By \cref{eq:collision}, we get 
    \begin{equation}\begin{split}    
        \sum_{\substack{
            x,y,x',y'\in[u], x\neq y, x'\neq y',\\
            (x,y)=(x',y')\,\vee\,(x,y)=(y',x')
        }}&
            \E[(s_x s_y f_x f_y [i_x = i_y]) (s_{x'} s_{y'} f_{x'} f_{y'}[i_{x'} = i_{y'}])]
            \\&=2\sum_{x,y\in[u],x\neq y} f_x^2f_y^2 \Pr[i_x=i_y]
            \\&=2\sum_{x,y\in[u],x\neq y} f_x^2f_y^2 (1 + \eps)/r
            \\&=2(F_2^2 - F_4)(1 + \eps)/r .\label{eq:two-pairs}
    \end{split}\end{equation}
    Adding up add \eqref{eq:distinct}, \eqref{eq:one-pair}, and
    \eqref{eq:two-pairs}, we get 
    \begin{align*}
        \Var[Y]
            &\le 2(1 + \eps)(F_2^2 - F_4)/r + F_2^2(4(1 + \delta) n / (rp^2) + n^2/p^4)
            \\&\le 2F_2^2/r + F_2^2 (2\eps/r + 4(1 + \delta)n / (rp^2) + n^2/p^4 - 2 /(rn)) .
    \end{align*}
    This finishes the proof.
\end{proof}

We are now ready to prove \Cref{thm:h-and-s-p}.
\begingroup
    \def\thelemma{\ref{thm:h-and-s-p}}
    \begin{theorem}
        Let $r>1$ and $u>r$ be powers of two and let $p=2^b-1>u$ be a
        Mersenne prime.
        Suppose we have a 4-universal hash function $h:[u]\to[2^b-1]$, e.g.,
        generated using Algorithm \ref{alg:Mersenne}. Suppose
        $i:[u]\to[r]$ and
        $s:[u]\to\{-1,1\}$ are constructed from $h$ as described in
        Algorithm \ref{alg:h-and-s}. Using this $i$ and $s$ 
        in the Count Sketch Algorithm \ref{alg:count-sketch}, the second moment 
        estimate $X=\sum_{i\in[k]} C_i^2$ satisfies:
        \begin{align}
           \E[X] &= F_2+(F_1^2-F_2)/p^2, \label{eq:E-F2-p}\\
           | \E[X] - F_2 | &\le F_2 (n - 1)/p^2, \label{eq:E-F2-p-com}\\
           \Var[X]&< 2F_2^2/r.\label{eq:V-F2-p}
        \end{align}
    \end{theorem}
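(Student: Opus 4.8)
The plan is to derive \Cref{thm:h-and-s-p} as a direct instantiation of \Cref{lem:count-mersenne}, after checking that the functions $i$ and $s$ constructed in \Cref{alg:h-and-s} satisfy its three hypotheses with explicit constants. First I would verify \nameref{prop:near-independence}: this is precisely \Cref{lem:remove-si}, where the special index is $t=2^\ell-1=r-1$, the bit-selection image of the missing all-\texttt{1}s value $p$. That lemma gives $\E[s_{x_0}A]=\frac1p\E[A\mid i_{x_0}=r-1]$, matching \eqref{eq:near-independence}. For the companion bound \eqref{eq:prob-special-value} I would invoke \eqref{eq:coll-ell=r-1}, which states $\Pr[i_x=r-1]=(1-(r-1)/p)/r\le 1/r$; hence we may take $\delta=0$. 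For \nameref{prop:collision} I would read off $\epsilon=(r-1)/p^2$ directly from the collision bound \eqref{eq:coll}.

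With $\delta=0$ and $\epsilon=(r-1)/p^2$ fixed, \Cref{lem:count-mersenne} reproduces the two expectation statements \eqref{eq:E-F2-p} and \eqref{eq:E-F2-p-com} verbatim, so those require no additional argument. The only real work is to upgrade the variance bound of \Cref{lem:count-mersenne}, namely $\Var[X]\le 2F_2^2/r+F_2^2\bigl(2\epsilon/r+4(1+\delta)n/(rp^2)+n^2/p^4-2/(rn)\bigr)$, to the strict inequality \eqref{eq:V-F2-p}. Substituting the constants, this amounts to showing that the correction term is negative:
\[
 \frac{2(r-1)}{rp^2}+\frac{4n}{rp^2}+\frac{n^2}{p^4}-\frac{2}{rn}<0.
\]

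I expect this final arithmetic to be the main obstacle, and the key is to exploit the size relations forced by the hypotheses rather than the crude bound $n<p$. Since $u$ is a power of two with $u<p=2^b-1$, we have $u\le 2^{b-1}=(p+1)/2$, so together with $n\le u$ and $r<u$ (both powers of two, whence $r\le u/2$) we get $n\le (p+1)/2$ and $r\le (p+1)/4$. Clearing denominators by multiplying through by $rnp^4>0$ turns the target into the polynomial inequality $n\bigl(2(r-1)p^2+4np^2+rn^2\bigr)<2p^4$, and I would bound the three left-hand terms separately using these estimates. The delicate point is the $4n^2p^2$ term: the bound $n<p$ alone would only give $4p^4$, so it is essential that $n$ is at most about $p/2$, which keeps the leading coefficient of the left-hand side below $2$ (concretely around $41/32$) with the remaining lower-order terms absorbed by the slack. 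Once the correction term is shown to be strictly negative, \eqref{eq:V-F2-p} follows and the proof is complete.
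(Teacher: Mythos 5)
Your proposal is correct and follows essentially the same route as the paper: instantiate \Cref{lem:count-mersenne} with $\delta=0$ (from \eqref{eq:coll-ell=r-1} via \Cref{lem:remove-si} with $t=r-1$) and $\eps=(r-1)/p^2$ (from \eqref{eq:coll}), then verify by elementary arithmetic that the correction term in the variance bound is negative using $n\le u\le(p+1)/2$ and $r\le u/2$. The only difference is cosmetic: the paper bounds the correction term by successive substitutions ($n\to u$, then $r\to u/2$, then $u/p\le 4/7$) while you clear denominators and bound the resulting polynomial terms separately; just note that with the exact bounds $n\le(p+1)/2$, $r\le(p+1)/4$ the worst-case coefficient at $p=7$ is about $1.69$ rather than your asymptotic $41/32$, still safely below $2$.
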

    \addtocounter{lemma}{-1}
\endgroup

From \Cref{eq:remove-si} and \Cref{eq:coll-ell=r-1} we have
\nameref{prop:near-independence} with $
    \Pr[i_x = 2^b - 1] \le (1 - (r - 1)/p)/r$
   and \Cref{eq:coll} $(1 + (r - 1)/p^2)/r$-low collision probability.
Now \Cref{lem:count-mersenne} give us \eqref{eq:E-F2-p}
and \eqref{eq:E-F2-p-com}. Furthermore, we have that
\begin{align*}
    \Var[X] 
        &\le 2F_2^2/r + F_2^2 (2\eps/r + 4(1 + \delta)n / (rp^2) + n^2/p^4 - 2 /(rn))
        \\&= 2F_2^2/r + F_2^2(2/p^2 + 4n/(r p^2) + n^2/p^4 - 2/(rn)) .
\end{align*}

We know that $2 \le r \le u/2 \le (p + 1)/4$ and $n \le u$.
This implies that $p \ge 7$ and that $n/p \le u/p \le 4/7$.
We want to prove that
$2/p^2 + 4n/(r p^2) + n^2/p^4 - 2/(rn) \le 0$ which would
prove our result. We get that
\begin{align*}
    2/p^2 + 4n/(r p^2) + n^2/p^4 - 2/(rn)
        \le 2/p^2 + 4u/(r p^2) + u^2/p^4 - 2/(ru) .
\end{align*}
Now we note that $4u/(r p^2) - 2/(ru) = (2u^2 - p^2)/(u p^2 r) \le 0$
since $u \le (p + 1)/2$ so it maximized when $r = u/2$. We then get
that
\begin{align*}
    2/p^2 + 4u/(r p^2) + u^2/p^4 - 2/(ru)
        \le 2/p^2 + 8/p^2 + u^2 / p^4 - 4/u^2 .
\end{align*}
We now use that $u/p \le (4/7)^2$ and get that
\begin{align*}
    2/p^2 + 8/p^2 + u^2 / p^4 - 4/u^2
        \le (10 + (4/7)^2 - 4 (7/4)^2)/p^2
        \le 0 .
\end{align*}
This finishes the proof of \eqref{eq:V-F2-p} and thus also of \Cref{thm:h-and-s-p}.

\section{Algorithms and analysis with an arbitrary number of buckets}\label{sec:arbitrary-buckets}

Picture a hash function as throwing balls (keys) into buckets (hash values).
In the previous sections we have considered the case of a prime number of buckets.
We now consider the case of an arbitrary, not necessarily prime, number of buckets.

We will analyse the collision probability with the most uniform maps introduced in Section \ref{sec:most-uniform}, and later we will show how they can be used in connection with the two-for-one hashing from Section \ref{sec:two-for-one}.

\subsection{An arbitrary number of buckets}

We have a hash function $h:U\to Q$, but we want hash values in $R$, so we need a map $\mu:Q\to R$, and then use $\mu\circ h$ as our hash function from $U$ to $R$.
We normally assume that the hash values with $h$ are pairwise independent, that is, for any distinct $x$ and $y$, the hash values $h(x)$ and $h(y)$ are independent, but then $\mu(h(x))$ and $\mu(h(y))$ are also independent.
This means that the collision probability can be calculated as \[\Pr[\mu(h(x))=\mu(h(y))]=\sum_{i\in R}\Pr[\mu(h(x))=\mu(h(y))=i]=\sum_{i\in R}\Pr[\mu(h(x)=i)]^2.
\]
This sum of squared probabilities attains is minimum value $1/|R|$
exactly when $\mu(h(x))$ is uniform in $R$.

Let $q=|Q|$ and $r=|R|$.
Suppose that $h$ is $2$-universal.
Then $h(x)$ is uniform in $Q$, and then we get the lowest collision probability with $\mu\circ h$ if $\mu$ is most uniform as defined in Section \ref{sec:most-uniform}, that is, the number of elements from $Q$ mapping to any $i\in[r]$ is either $\floor{q/r}$ or $\ceil{q/r}$.
To calculate the collision probability, let $a\in[r]$ be such that $r$ divides $q+a$.
Then the map $\mu$ maps $\ceil{q/r}=(q+a)/r$ balls to $r-a$ buckets and $\floor{q/r}=(q+a-r)/r$ balls to $a$ buckets.
For a key $x\in [u]$, we thus have $r-a$ buckets hit with probability $(1+a/q)/r$ and $a$ buckets hit with probability $(1-(r-a)/q)/r$.
The collision probability is then
\begin{equation}
   \begin{split}
      \Pr[\mu(h(x))=\mu(h(y))] &= (r-a)((1+a/q)/r)^2+a((1-(r-a)r/q)/r)^2   \\&=(1+a(r-a)/q^2)/r \\&\le \left(1+(r/(2q))^2\right)/r.
      \label{eq:coll-a}
   \end{split}
\end{equation}
Note that the above calculation generalizes the one for \eqref{eq:coll} which had $a=1$.
We will think of $(r/(2q))^2$ as the general relative rounding cost when we do not have any information about how $r$ divides $q$.

\subsection{Two-for-one hashing from uniform bits to an arbitrary number of buckets}
We will now briefly discuss how we get the two-for-one hash functions in count sketches with an arbitrary number $r$ of buckets based on a single $4$-universal hash function $h:[u]\to [2^b]$.
We want to construct the two hash functions $s:[u]\to\{-1,1\}$ and $i:[u]\to[r]$.
As usual the results with uniform $b$-bit strings will set the bar that we later compare with when from $h$ we get hash values that are only uniform in $[2^b-1]$.

The construction of $s$ and $i$ is presented in Algorithm \ref{alg:b-bit-arb-r}.
\begin{algorithm}[H]
   \caption{For key $x\in [u]$, compute $i(x)=i_x\in[r]$ and $s(x)=s_x\in\{-1,1\}$.
   \newline
   Uses 4-universal $h:[u]\to [2^b]$.}
   \label{alg:b-bit-arb-r}
   \begin{algorithmic}
      \State $h_x\gets h(x)$
      \Comment $h_x$ has $b$ uniform bits
      \State $j_x\gets h_x\andtt(2^{b-1}-1)$
      \Comment $j_x$ gets $b-1$ least significant bits of $h_x$
      \State $i_x\gets (r*j_x)\rs (b-1)$
      \Comment $i_x$ is most uniform in $[r]$
      \State $a_x\gets h_x\rs (b-1)$
      \Comment $a_x$ gets the most significant bit of $h_x$
      \State $s_x\gets (a_x\ls 1)-1$
      \Comment $s_x$ is uniform in $\{-1,1\}$ and independent of $i_x$.
   \end{algorithmic}
\end{algorithm}
The difference relative to Algorithm \ref{alg:h-and-s} is the computation of $i_x$ where we now first pick out the $(b-1)$-bit string $j_x$ from $h_x$, and then apply the most uniform map $(rj_x)\rs (b-1)$ to get $i_x$.
This does not affect $s_x$ which remains independent of $i_x$, hence we still have \nameref{prop:independence}.
But $i_x$ is no longer uniform in $[r]$ and only most uniform so by \eqref{eq:coll-a} we have $(1 + (r/2^b)^2)/r$-low collision probability.
Now \Cref{lem:count-classic} give us $\E[X] = F_2$ and
\begin{equation}
   \label{eq:Var-b-bit-arb-r} \Var[X] \le 2(F_2^2 - F_4)\left(1+(r/2^b)^2\right)/r \le 2 F_2^2\left(1 + (r/2^b)^2 \right)/r .
\end{equation}

\subsection{Two-for-one hashing from Mersenne primes to an arbitrary number of buckets}
We will now show how to get the two-for-one hash functions in count sketches with an arbitrary number $r$ of buckets based on a single $4$-universal hash function $h:[u]\to [2^b-1]$.
Again we want to construct the two hash functions $s:[u]\to\{-1,1\}$ and $i:[u]\to[r]$.
The construction will be the same as we had in Algorithm \ref{alg:b-bit-arb-r} when $h$ returned uniform values in $[2^b]$ with the change that we set $h_x\gets h(x)+1$, so that it becomes uniform in $[2^b]\setminus\{0\}$.
It is also convenient to swap the sign of the sign-bit $s_x$ setting $s_x\gets 2a_x - 1$ instead of $s_x\gets 1-2a_x$.
The basic reason is that this makes the analysis cleaner.
The resulting algorithm is presented as Algorithm \ref{alg:Mersenne-arb-r}.
\begin{algorithm}[H]
   \caption{For key $x\in [u]$, compute $i(x)=i_x\in[r]$ and
   $s(x)=s_x\in\{-1,1\}$.\rule{5ex}{0ex}
   Uses 4-universal $h:[u]\to [p]$ for Mersenne prime $p=2^b-1\geq u$.
   }
   \label{alg:Mersenne-arb-r}
   \begin{algorithmic}
      \State $h_x\gets h(x)+1$
      \Comment $h_x$ uses $b$ bits uniformly except $h_x\neq 0$
      \State $j_x\gets h_x\andtt(2^{b-1}-1)$
      \Comment $j_x$ gets $b-1$ least significant bits of $h_x$
      \State $i_x\gets (r*j_x)\rs (b-1)$
      \Comment $i_x$ is quite uniform in $[r]$
      \State $a_x\gets h_x\rs (b-1)$
      \Comment $a_x$ gets the most significant bit of $h_x$
      \State $s_x\gets 1-(a_x\ls1)$
      \Comment $s_x$ is quite uniform in $\{-1,1\}$ and quite independent of $i_x$.
   \end{algorithmic}
\end{algorithm}
The rest of Algorithm \ref{alg:Mersenne-arb-r} is exactly like Algorithm \ref{alg:b-bit-arb-r}, and we will now discuss the new distributions of the resulting variables.
We had $h_x$ uniform in $[2^b]\setminus\{0\}$, and then we set $j_x \gets h_x\andtt(2^{b-1}-1)$.
Then $j_x\in[2^{b-1}]$ with $\Pr[j_x=0]=1/(2^{b}-1)$ while  $\Pr[j_x=j]=2/(2^{b}-1)$ for all $j>0$.

Next we set $i_x\gets (rj_x)\rs b-1$.
We know from Lemma \ref{lem:most-uniform} (i) that this is a most uniform map from $[2^{b-1}]$ to $[r]$.
It maps a maximal number of elements from $[2^{b-1}]$ to $0$, including $0$ which had half probability for $j_x$.
We conclude
\begin{align}
   \Pr[i_x=0] & = (\ceil{2^{b-1}/r}2-1)/(2^{b}-1) \label{eq:prix0} \\ \Pr[i_x = i] & \in \{\floor{2^{b-1}/r}2/(2^{b}-1), \ceil{2^{b-1}/r}2/(2^{b}-1)\} \mbox{ for $i \neq  0$} \label{eq:prixneq0} .
\end{align}
We note that the probability for $0$ is in the middle of the two other bounds and often this yields a more uniform distribution on $[r]$ than the most uniform distribution we could get from the uniform distribution on $[2^{b-1}]$.

With more careful calculations, we can get some nicer bounds that we shall later use.
\begin{lemma}
   \label{lem:ix-r-dist}
   For any distinct $x,y\in [u]$,
   \begin{align}
      \Pr[i_x=0] & \le(1+r/2^b)/r\label{eq:ix=0} \\ \Pr[i_x=i_y] & \leq \left(1+(r/2^b)^2\right)/r.
      \label{eq:ix=iy}
   \end{align}
\end{lemma}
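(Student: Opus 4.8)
The plan is to compute the distribution of $i_x$ explicitly and reduce both bounds to elementary polynomial inequalities, keeping everything in terms of a single ``rounding'' parameter so that the dominant terms cancel. Write $q=2^{b-1}$ and $N=2^b-1=2q-1$, and let $n_i$ be the number of $j\in[2^{b-1}]$ that the most uniform map $j\mapsto (rj)\rs(b-1)$ sends to bucket $i$, so $n_i\in\{\floor{q/r},\ceil{q/r}\}$ and, since bucket $0$ receives a maximal block including $j=0$ (as in the proof of \Cref{lem:most-uniform}), $n_0=\ceil{q/r}$. By \eqref{eq:prix0} and \eqref{eq:prixneq0}, $\Pr[i_x=i]=(2n_i-[i=0])/N$. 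I would then introduce the integer $a\in\{0,\dots,r-1\}$ with $r\mid(q+a)$; this gives $n_0=(q+a)/r$, and the computation behind \eqref{eq:coll-a} (uniform on $[q]$ pushed through a most uniform map) gives $\sum_i n_i^2=(q^2+a(r-a))/r$. Throughout I may use $r\le q$, since $p=2^b-1\ge u>r$ with $u$ a power of two forces $r<u\le 2^{b-1}=q$.

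For \eqref{eq:ix=0} I would substitute $n_0=(q+a)/r$ into $\Pr[i_x=0]=(2n_0-1)/N$ and clear denominators; the claimed bound $(1+r/2^b)/r$ then reduces to $4q(r-a)-2q-r\ge 0$, which follows immediately from $a\le r-1$ (so $r-a\ge1$) together with $r\le 2q$.

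For \eqref{eq:ix=iy} I would first invoke pairwise independence — which holds because $h$ is $4$-universal and $x\neq y$, so $h(x),h(y)$, and hence $i_x,i_y$, are independent — to write $\Pr[i_x=i_y]=\sum_{i\in[r]}\Pr[i_x=i]^2$. Expanding with $\Pr[i_x=i]=(2n_i-[i=0])/N$ gives $\sum_i\Pr[i_x=i]^2=(4\sum_i n_i^2-4n_0+1)/N^2$. Substituting the expressions above and using $N^2=4q^2-4q+1$, the terms in $a$ combine into the clean identity
\[
\Pr[i_x=i_y]=\frac1r+\frac{4a(r-1-a)+r-1}{r(4q^2-4q+1)}.
\]
To finish I would bound the quadratic $4a(r-1-a)\le(r-1)^2$ (its real maximum at $a=(r-1)/2$), so the numerator is at most $r^2-r$, and then check $\tfrac{r^2-r}{4q^2-4q+1}\le\tfrac{r^2}{4q^2}$, which cross-multiplies to $r(4q-1)\le 4q^2$ and holds since $r\le q$.

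The main obstacle — and the reason a naive argument fails — is that one must not bound the two sources of error separately. The non-uniformity cost $\sum_i n_i^2$ is largest exactly when $r\nmid q$ with $a\approx r/2$, but that is precisely when $n_0=\ceil{q/r}$ overshoots $q/r$, so the deficit $-4n_0$ compensates. Decoupling them (e.g.\ using $\sum_i n_i^2\le q^2(1+(r/2^b)^2)/r$ together with $n_0\ge q/r$) overestimates and in fact yields a \emph{false} bound. Expressing both quantities through the single parameter $a$ is what makes the cross terms cancel and collapses everything to the displayed identity, after which only routine inequalities remain.
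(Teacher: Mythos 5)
Your proposal is correct and follows essentially the same route as the paper: both parametrize the rounding by the integer $a$ with $r$ dividing $q+a$, compute $\Pr[i_x=i_y]=\sum_{k\in[r]}\Pr[i_x=k]^2$ exactly as a quadratic in $a$, maximize it at $a=(r-1)/2$ to get the numerator $r^2-r$, and finish with the comparison that holds for $r\le q$; the bound \eqref{eq:ix=0} likewise reduces in both to $\ceil{q/r}\le(q+r-1)/r$. Your bookkeeping via $\sum_i n_i^2$ with the correction term $-4n_0+1$ is only a cosmetic repackaging of the paper's three-case sum, though your closing remark about why the two error sources must not be bounded separately is a nice (and accurate) addition.
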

\begin{proof}
   The proof of \eqref{eq:ix=0} is a simple calculation.
   Using \eqref{eq:prix0} and the fact $\ceil{2^{b-1}/r}\le(2^{b-1}+r-1)/r$ we have
   \begin{align*}
      \Pr[i_x=0] & \le (2(2^{b-1}+r-1)/r)-1)/(2^{b}-1) \\    & =\left(1+(r-1)/(2^b-1)\right)/r\\   & \le\left(1+r/2^b\right)/r.
   \end{align*}
   The last inequality follows because $r<u<2^b$.

   For \ref{eq:ix=iy}, let $q=2^{b-1}$ and $p=1/(2q-1)$.
   We define $a\ge 0$ to be the smallest integer, such that\footnote{Like Knuth we let our divisibility symbol lean leftward.} $r\setminus q+a$.
   In particular this means $\lceil q/r\rceil = (q+a)/r$ and $\lfloor q/r\rfloor = (q-r+a)/r$.

   We bound the sum $$ \Pr[i_x=i_y] = \sum_{k=0}^{r-1} \Pr[i_x = k]^2 $$ by splitting into three cases: 1) The case $i_x=0$, where $\Pr[i_x=0]=(2\ceil{q/r}-1)p$, 2) the $r-a-1$ indices $j$ where $\Pr[i_x=j]=2\ceil{q/r}p$, and 3) the $a$ indices $j$ st.
   $\Pr[i_x=j]=2\floor{q/r}p$.
   \begin{align*}
      \Pr[i_x=i_y]
       & =
      (2p\ceil{ q/r}-p)^2 + (r-a-1) (2p \lceil q/r\rceil)^2 + (r-a) (2p \lfloor q/r \rfloor)^2
      \\  & = ((4a+1)r+4(q+a)(q-a-1))p^2/r
      \\  & \le (1 + (r^2-r)/(2q-1)^2) / r.
   \end{align*}
   The last inequality comes from maximizing over $a$, which yields $a=(r-1)/2$.

   The result now follows from
   \begin{align}
      (r^2-r)/(2q-1)^2 \le (r-1/2)^2/(2q-1)^2 \le (r/(2q))^2,
   \end{align}
   which holds exactly when $r\le q$.

\end{proof}
Lemma \ref{lem:ix-r-dist} above is all we need to know about the marginal distribution of $i_x$.
However, we also need a replacement for Lemma \ref{lem:remove-si} for handling the sign-bit $s_x$.
\begin{lemma}
   \label{lem:remove-si-r-dist}
   Consider distinct keys $x_0,\ldots,x_{j - 1}$, $j\leq k$ and an expression $B=s_{x_0}A$ where $A$ depends on $i_{x_0},\ldots,i_{x_{j - 1}}$ and $s_{x_1},\ldots,s_{x_{j - 1}}$ but not $s_{x_0}$.
   Then
   \begin{equation}
      \label{eq:remove-si-r-dist} \E[s_xA]=\E[A \mid i_x=0]/p.
   \end{equation}
\end{lemma}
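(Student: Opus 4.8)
The lemma to prove is Lemma~\ref{lem:remove-si-r-dist}: with the construction of Algorithm~\ref{alg:Mersenne-arb-r}, we have $\E[s_{x_0}A]=\E[A\mid i_{x_0}=0]/p$, where $A$ depends on $i_{x_0},\ldots,i_{x_{j-1}}$ and $s_{x_1},\ldots,s_{x_{j-1}}$ but not on $s_{x_0}$, and $p=1/(2^b-1)$. The key structural difference from the earlier Lemma~\ref{lem:remove-si} is twofold: the sign convention is flipped ($s_x\gets 1-(a_x\ls 1)$ together with the shift $h_x\gets h(x)+1$), and the special index is now $t=0$ rather than $2^\ell-1$, because the most-uniform map $(rj_x)\rs(b-1)$ sends the all-zeros string to bucket $0$.

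**The plan.** I would mirror the proof of Lemma~\ref{lem:remove-si} exactly, using the ``play with the distribution'' identity \eqref{eq:play-with-dist}. The construction starts from $h(x_0)\sim\unif[p]=\unif[2^b-1]$, then sets $h_{x_0}\gets h(x_0)+1$, which is uniform on $\{1,\ldots,2^b-1\}=[2^b]\setminus\{0\}$. I would pretend instead that $h_{x_0}$ is uniform on all of $[2^b]$, i.e.\ that we allow the extra value $h_{x_0}=0$. Under this fake uniform distribution, the top bit $a_{x_0}=h_{x_0}\rs(b-1)$ is an unbiased coin independent of the low bits $j_{x_0}$, hence $s_{x_0}=1-(a_{x_0}\ls 1)$ is uniform in $\{-1,1\}$ and independent of $i_{x_0}$. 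Since the other keys' values $(i_{x_i},s_{x_i})$ for $i\ge 1$ are functions of the independent hash values $h(x_i)$, they are independent of $s_{x_0}$ as well. Therefore $\E_{h_{x_0}\sim\unif[2^b]}[s_{x_0}A]=0$, exactly as in the earlier lemma.

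**Applying the correction term.** The identity \eqref{eq:play-with-dist} expresses the true expectation (over $h_{x_0}$ uniform on $[2^b]\setminus\{0\}$, equivalently the shifted $\unif[p]$) as the fake uniform expectation times $(p+1)/p$ minus the contribution of the spurious point divided by $p$. Here the spurious point is $h_{x_0}=0$. I must identify what $s_{x_0}$ and $i_{x_0}$ become at $h_{x_0}=0$: the top bit $a_{x_0}=0$ gives $s_{x_0}=1-0=1$, and $j_{x_0}=0$ gives $i_{x_0}=(r\cdot 0)\rs(b-1)=0$. So the correction is
\[
\E[s_{x_0}A]=0-\E[s_{x_0}A\mid h_{x_0}=0]/p=-\E[A\mid i_{x_0}=0]/p\cdot(+1).
\]
This is where I must be careful about the sign flip: because $s_{x_0}=+1$ at the spurious point (not $-1$ as in the old lemma), and because \eqref{eq:play-with-dist} subtracts the spurious term, I need to check that the two sign reversals — the flipped sign convention and the shift sending the special string to $0$ instead of all-ones — combine to give the clean statement $\E[s_{x_0}A]=+\E[A\mid i_{x_0}=0]/p$ rather than a minus sign.

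**The main obstacle.** The only real subtlety is bookkeeping the sign and confirming the special index is genuinely $t=0$. I would verify that at $h_{x_0}=0$ both the sign bit and the bucket index take the claimed values, and that the overall sign in \eqref{eq:play-with-dist} comes out positive. The authors explicitly remark that they swapped the sign of $s_x$ and added $1$ to $h(x)$ precisely ``to make the analysis cleaner,'' so I expect the two conventions to cancel the minus sign present in Lemma~\ref{lem:remove-si}, yielding $+\E[A\mid i_{x_0}=0]/p$. Everything else — the vanishing of the fake-uniform expectation and the independence arguments — transfers verbatim from the proof of Lemma~\ref{lem:remove-si}, so no new ideas are needed beyond this careful sign check.
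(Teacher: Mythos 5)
Your overall strategy is exactly the paper's: apply the ``play with the distribution'' identity \eqref{eq:play-with-dist} (with the spurious point now being $h_{x_0}=0$ rather than $h(x_0)=p$), observe that under the fake distribution $h_{x_0}\sim\unif[2^b]$ the bit $a_{x_0}$ is uniform and independent of $j_{x_0}$ so that $\E_{h_{x_0}\sim\unif[2^b]}[s_{x_0}A]=0$, and then evaluate the correction term at $h_{x_0}=0$, where $j_{x_0}=0$ forces $i_{x_0}=0$. All of that transfers verbatim, as you say.

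However, the one step you yourself identify as ``the only real subtlety'' --- the sign --- is left unresolved, and the computation you actually write down comes out wrong. Using $s_{x_0}=1-(a_{x_0}\ls 1)$ you get $s_{x_0}=+1$ at $h_{x_0}=0$, and hence $\E[s_{x_0}A]=-\E[A\mid i_{x_0}=0]/p$, which is the \emph{negation} of the claimed identity; saying you ``expect the two conventions to cancel the minus sign'' is not a proof, and with the convention you used they do not cancel. The resolution is that the intended convention for \Cref{alg:Mersenne-arb-r} is the one stated in the prose immediately before it, $s_x\gets 2a_x-1$ (the displayed pseudocode line $s_x\gets 1-(a_x\ls 1)$ contradicts that prose and is evidently a typo). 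With $s_x=2a_x-1$, the spurious point $h_{x_0}=0$ gives $a_{x_0}=0$ and $s_{x_0}=-1$, so the correction term is $-\E[s_{x_0}A\mid h_{x_0}=0]/p=+\E[A\mid i_{x_0}=0]/p$, matching the lemma; this is precisely how the paper's own proof concludes. To complete your argument you must commit to this convention and carry the $s_{x_0}=-1$ through, rather than deferring the check. (Minor: you also write $p=1/(2^b-1)$ where you mean $p=2^b-1$, and the passage from conditioning on $h_{x_0}=0$ to conditioning on $i_{x_0}=0$ deserves the one-line remark that $A$ depends on $h_{x_0}$ only through $i_{x_0}$.)
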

\begin{proof}
   The proof follows the same idea as that for Lemma \ref{lem:remove-si}.
   First we have \[\E[B]=\E_{h(x_0) \sim \unif([2^b]\setminus\{0\})}[B]=\E_{h(x_0) \sim \unif[2^b]}[B]2^b/p-\E[B \mid h(x_0)=0]/p.
   \]
   With $h(x_0)\sim \unif[2^b]$, the bit $a_{x_0}$ is uniform and
   independent of $j_{x_0}$, so $s_{x_1}\in\{-1,1\}$ is uniform and
   independent of $i_{x_0}$, and therefore
   \[\E_{h(x_0) \sim \unif[2^b]}[s_{x_0}
         A]=0.
   \]
   Moreover, $h(x_0)=0$ implies $j_x={x_0}$, $i_{x_0}=0$, $a_{x_0}=0$,
   and $s_{x_0}=-1$,
   so
   \[\E[s_{x_0}
         A]=-\E[s_{x_0}A \mid h(x_0)=0]/p=\E[A \mid i_{x_0}=0].
   \]
\end{proof}

From \Cref{lem:remove-si-r-dist} and \eqref{eq:ix=0} we have \nameref{prop:near-independence} with $\Pr[i_x = 0] \le (1 + r/2^b)/r$, and \eqref{eq:ix=iy} implies that we have $(1 + (r/2^b)^2)/r$-low collision probability.
We can then use \Cref{lem:count-mersenne} to prove the following result.
\begin{theorem}
   \label{thm:h-and-s-p-arb-r}
   Let $u$ be a power of two, $1 < r \le u/2$, and let $p=2^b-1>u$ be a Mersenne prime.
   Suppose we have a 4-universal hash function $h:[u]\to[2^b-1]$, e.g., generated using \Cref{alg:Mersenne}.
   Suppose $i:[u]\to[r]$ and $s:[u]\to\{-1,1\}$ are constructed from $h$ as described in \Cref{alg:Mersenne-arb-r}.
   Using this $i$ and $s$ in the Count Sketch Algorithm \ref{alg:count-sketch}, the second moment estimate $X=\sum_{i\in[k]} C_i^2$ satisfies:
   \begin{align}
      \E[X] & = F_2+(F_1^2-F_2)/p^2, \label{eq:E-F2-p-arb-r} \\ | \E[X] - F_2 | & \le F_2 (n - 1)/p^2, \label{eq:E-F2-p-com-arb-r}\\ \Var[X] & < 2(1 + (r/2^b)^2)F_2^2/r.
      \label{eq:V-F2-p-arb-r}
   \end{align}
\end{theorem}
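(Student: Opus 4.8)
The plan is to funnel everything through \Cref{lem:count-mersenne}, which is already phrased in terms of two parameters $\delta$ and $\eps$ measuring the deviation from ideal uniform behaviour. The discussion preceding the theorem has essentially done the setup work: by \Cref{lem:remove-si-r-dist} the construction of \Cref{alg:Mersenne-arb-r} satisfies \nameref{prop:near-independence} with special index $t=0$, and \eqref{eq:ix=0} bounds $\Pr[i_x=0]\le(1+r/2^b)/r$, so we may take $\delta=r/2^b$; similarly \eqref{eq:ix=iy} gives $(1+(r/2^b)^2)/r$-low collision probability, so $\eps=(r/2^b)^2$. With both properties in hand, \Cref{lem:count-mersenne} applies directly.

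Invoking the lemma immediately produces $\E[X]=F_2+(F_1^2-F_2)/p^2$ and $|\E[X]-F_2|\le F_2(n-1)/p^2$, which are exactly \eqref{eq:E-F2-p-arb-r} and \eqref{eq:E-F2-p-com-arb-r}; these cost nothing further because those two conclusions of the lemma involve neither $\delta$ nor $\eps$. The real content is the variance bound \eqref{eq:V-F2-p-arb-r}. Substituting $\eps=(r/2^b)^2$ and $\delta=r/2^b$ into the lemma's variance estimate gives
\[\Var[X]\le\frac{2F_2^2}{r}+F_2^2\left(\frac{2(r/2^b)^2}{r}+\frac{4(1+r/2^b)n}{rp^2}+\frac{n^2}{p^4}-\frac{2}{rn}\right).\]
The term $2(r/2^b)^2/r$ is precisely what I want to absorb into the target $2(1+(r/2^b)^2)F_2^2/r$, so after extracting it the theorem reduces to the single scalar inequality
\[\frac{4(1+r/2^b)n}{rp^2}+\frac{n^2}{p^4}-\frac{2}{rn}<0.\]

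Proving this last inequality is the main (and essentially the only) obstacle, and it is a careful but routine estimate mirroring the one closing the proof of \Cref{thm:h-and-s-p}. I would first record the structural constraints: since $u$ is a power of two with $r\le u/2$ and $u<p=2^b-1$, we get $u\le 2^{b-1}=(p+1)/2$, hence $r/2^b\le 1/4$ (so $1+r/2^b\le 5/4$) and $p\ge 7$ (so $u/p\le 4/7$). Each of the three terms above is increasing in $n$ on $[1,u]$ — including $-2/(rn)$, which increases toward $0$ — so I may replace $n$ by its maximum $n\le u$, upper-bounding the expression by $4(1+r/2^b)u/(rp^2)+u^2/p^4-2/(ru)$. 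Clearing the positive denominator $rup^4$, it suffices to show $4(1+r/2^b)u^2p^2+ru^3<2p^4$; using $1+r/2^b\le 5/4$ and $r\le u/2$ this follows from $5u^2p^2+u^4/2<2p^4$, i.e.\ $5t^2+t^4/2<2$ for $t=u/p\le 4/7$, which holds numerically. This yields the strict scalar inequality, which propagates to \eqref{eq:V-F2-p-arb-r} and completes the proof.
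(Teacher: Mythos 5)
Your proposal is correct and follows essentially the same route as the paper: it feeds $\delta=r/2^b$ (from Lemma \ref{lem:remove-si-r-dist} and \eqref{eq:ix=0}) and $\eps=(r/2^b)^2$ (from \eqref{eq:ix=iy}) into Lemma \ref{lem:count-mersenne} and reduces the variance bound to the scalar inequality $4(1+r/2^b)n/(rp^2)+n^2/p^4-2/(rn)\le 0$. The only difference is cosmetic: you close that inequality by clearing denominators and checking $5t^2+t^4/2<2$ at $t=u/p\le 4/7$, whereas the paper splits off $4u/(rp^2)-2/(ru)\le 0$ first; both are valid under the same constraints $r\le u/2\le(p+1)/4$, $p\ge 7$.
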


Now \Cref{lem:count-mersenne} gives us \eqref{eq:E-F2-p-arb-r} and \eqref{eq:E-F2-p-com-arb-r}.
Furthermore, we have that
\begin{align*}
   \Var[X] & \le 2F_2^2/r + F_2^2 (2\eps/r + 4(1 + \delta)n / (rp^2) + n^2/p^4 - 2 /(rn)) \\  & = 2(1 + (r/2^b)^2)F_2^2/r + F_2^2(4(1 + r/2^b)n/(r p^2) + n^2/p^4 - 2/(rn)) \\  & \le 2(1 + (r/2^b)^2)F_2^2/r + F_2^2(4(1 + r/p)n/(r p^2) + n^2/p^4 - 2/(rn)) .
\end{align*}

We know that $2 \le r \le u/2 \le (p + 1)/4$ and $n \le u$.
This implies that $p \ge 7$ and that $n/p \le u/p \le 4/7$.
If we can prove that $4(1 + r/p)n / (rp^2) + n^2/p^4 - 2 / (rn) \le 0$ then we have the result.
We have that
\begin{align*}
   4(1 + r/p)n / (rp^2) + n^2/p^4 - 2 / (rn) & = 4 n/(rp^2) + 4 n /(p^3) + n^2/p^4 - 2/(rn) \\  & \le 4u/(rp^2) + 4u/(p^3) + u^2/p^4 - 2/(ru) .
\end{align*}
Again we note that $4u/(r p^2) - 2/(ru) = (2u^2 - p^2)/(u p^2 r) \le 0$ since $u \le (p + 1)/2$ so it maximized when $r = u/2$.
We then get that
\begin{align*}
   4u/(r p^2) + 4u/(p^3) + u^2/p^4 - 2/(ru) \le 8/p^2 + 4u/(p^3) + u^2/p^4 - 4/u^2 .
\end{align*}
We now use that $u/p \le (4/7)^2$ and get that
\begin{align*}
   8/p^2 + 4u/(p^3) + u^2/p^4 - 4/u^2 .
   \le (8 + 4 (4/7) + (4/7)^2 - 4 (7/4)^2)/p^2
   \le 0 .
\end{align*}
This finishes the proof of \eqref{eq:V-F2-p-arb-r} and thus also of \Cref{thm:h-and-s-p-arb-r}.



\section{Quotient and Remainder with Generalized Mersenne Primes}
\label{sec:division}

The purpose of this section is to prove the correctness of Algorithm \ref{alg:division-generalized}.
In particular we will prove the following equivalent mathematical statement:

\begin{theorem}\label{thm:simple-div}
   Given integers $q>c>0$, $n\ge 1$ and a value $x$ such that
   $$0\le x \le \begin{cases}
      c (q/c)^{n} - c &\quad\text{if } c\!\setminus\!q \quad\text{($c$ divides $q$)} \\
      (q/c)^{n-1}(q-c) &\quad\text{otherwise}
   \end{cases}.$$
   Define the sequence $(v_i)_{i\in[n+1]}$ by
   $
      v_0 = 0$ and
      $v_{i+1} = \left\lfloor\frac{(v_i+1)c+x}{q}\right\rfloor$.
   Then
   $$
      \left\lfloor\frac{x}{q-c}\right\rfloor = v_n.$$
\end{theorem}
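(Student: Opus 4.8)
The plan is to reformulate the claim in terms of the gap between $v_i$ and the target $z := \floor{x/(q-c)}$, and to show this gap vanishes after exactly $n$ steps. Write $x = z(q-c) + \rho$ with $0 \le \rho < q-c$, and set $s := (z+1)c + x - zq$. A direct substitution gives $s = c + \rho$, so that $c \le s < q$; the only role of the somewhat mysterious ``$+c$'' in the recursion is precisely to force $s \ge c$, which is what will make the final estimate tight. Writing $d_i := z - v_i$ and using $(v_i+1)c + x = zq + s - c\,d_i$, the defining recursion collapses to the clean form $d_0 = z$ and $d_{i+1} = \ceil{(c\,d_i - s)/q}$. Since $0 \le s < q$, an easy induction shows $d_i \ge 0$ for all $i$ (equivalently $v_i \le z$), so it suffices to prove $d_n \le 0$, i.e. $d_n = 0$.

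To control the number of steps, I would introduce the threshold sequence $T_0 = 0$, $T_{j+1} = \floor{(qT_j + s)/c}$, and prove by induction on $j$ the monotone statement: if the current gap is at most $T_j$, then after $j$ further applications of the recursion the gap is $\le 0$. The inductive step is immediate from the monotonicity of $d \mapsto \ceil{(cd-s)/q}$ together with $T_{j+1}c \le qT_j + s$. Taking $j = n$, the whole theorem reduces to the single inequality $z \le T_n$.

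It then remains to bound $T_n$ from below and compare with the hypothesis on $x$, and here the two cases of the theorem appear naturally. If $c \setminus q$, put $m = q/c$; the floor in the threshold recursion becomes exact, $T_{j+1} = mT_j + (1 + \floor{\rho/c})$, giving the closed form $T_j = (1+\floor{\rho/c})\frac{m^j-1}{m-1}$, hence $T_n \ge \frac{m^n-1}{m-1}$, which is exactly the bound $z \le \frac{c((q/c)^n-1)}{q-c}$ coming from $x \le c(q/c)^n - c$. If $c$ does not divide $q$, I would instead prove the cruder lower bound $T_j \ge (q/c)^{j-1}$ for $j \ge 1$ by induction (using $s \ge c$ so that $T_{j+1} \ge \floor{(qT_j + c)/c} = \floor{qT_j/c} + 1$), which against $x \le (q/c)^{n-1}(q-c)$ yields $T_n \ge (q/c)^{n-1} \ge z$.

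The main obstacle is the accumulation of rounding error. A naive geometric argument --- bounding $d_{i+1} \le c\,d_i/q + 1$ and summing the discounted $+1$'s --- loses an additive constant of about $1$ and only yields $d_n \le 1$, which is off by one and fails to prove $d_n = 0$. The fix is to avoid estimating the ceilings altogether: the exact threshold recursion $T_{j+1} = \floor{(qT_j+s)/c}$ records the rounding losses precisely, and the fact that $s \ge c$ (the payoff of the $+c$ correction) is exactly what makes $T_n$ large enough to absorb the stated range of $x$. I would double-check the boundary cases $n=1$ and $\rho = 0$ explicitly, since these are where the inequality $z \le T_n$ is tightest.
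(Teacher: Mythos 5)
Your proof is correct and is essentially the paper's own argument in lightly different notation: your gap $d_i = z - v_i$ is the paper's error $E_i$, your recursion $d_{i+1}=\lceil (cd_i-s)/q\rceil$ with $s=c+\rho$ is identical to the paper's $E_{i+1}=\lceil((E_i-1)c-h)/q\rceil$, and your threshold sequence $T_{j+1}=\lfloor (qT_j+s)/c\rfloor$ is the paper's $u_{j+1}=\lfloor \frac{q}{c}u_j+1\rfloor$ (with the worst case $\rho=0$ replaced by the actual remainder), lower-bounded by $(q/c)^{j-1}$ in general and by the exact geometric sum when $c$ divides $q$. No further changes needed.
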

We note that when $c<q-1$ a sufficient requirement is that $x< (q/c)^n$.
For $c=q-1$ we are computing $\floor{x/1}$ so we do not need to run the algorithm at all.

To be more specific, the error $E_i = \floor{\frac{x}{q-c}} - v_i$ at each step
is bounded by $0\le E_i\le u_{n-i}$,
where $u_i$ is a sequence defined by
$u_0=0$ and $u_{i+1} = \lfloor\frac{q}{c}u_i+1\rfloor$.
For example, this means that if we stop the algorithm after $n-1$ steps, the error will be at most $u_1=1$.
\begin{proof}
   Write $x = m(q-c)+h$ for non-negative integers $m$ and $h$ with $h<q-c$.
   Then we get
   \begin{align*}
      \left\lfloor\frac{x}{q-c}\right\rfloor = m.
      \label{eq:floor}
   \end{align*}

   Let $u_0=0$, $u_{i+1} = \lfloor\frac{q}{c}u_i+1\rfloor$.
   By induction $u_i \ge (q/c)^{i-1}$ for $i>0$.
   This is trivial for $i=1$ and $u_{i+1}=\lfloor \frac qc u_i +1\rfloor \ge \lfloor (q/c)^i + 1 \rfloor \ge (q/c)^i$.

   Now define $E_i\in\mathbb Z$ such that $v_i = m - E_i$.
   We will show by induction that $0\le E_{i} \le u_{n-i}$ for $0\le i\le n$ such that $E_n = 0$, which gives the theorem.
   For a start $E_0=m\ge 0$ and $E_0 = \lfloor x/(q-c)\rfloor \le (q/c)^{n-1} \le u_n$.

   For $c\setminus q$ we can be slightly more specific, and support $x \le c (q/c)^n-c$.
   This follows by noting that $u_i = \frac{(q/c)^i-1}{q/c-1}$ for $i>0$, since all the $q/c$ terms are integral.
   Thus for $E_0=\floor{x/(q-c)}\le u_n$ it suffices to require $x\le c q^n-c$.

   For the induction step we plug in our expressions for $x$ and $v_i$:
   \begin{align*}
      v_{i+1}
      &= \left\lfloor \frac{(m-E_i+1)c+m(q-c)+h}{q}\right\rfloor
    \\&=
    m
    +
    \left\lfloor \frac{(- E_i+1)c +h}{q}\right\rfloor
    \\&=
    m
    - \left\lceil \frac{(E_i-1)c - h}{q}\right\rceil.
   \end{align*}
   The lower bound follows easily from $E_i \ge 0$ and $h\le q-c-1$:
   $$E_{i+1} = \left\lceil \frac{E_ic - h - c}{q}\right\rceil \ge
   \left\lceil \frac{- q + 1}{q}\right\rceil = 0.$$
   For the upper bound we use the inductive hypothesis as well as the bound $h\ge 0$:
   \begin{align*}
      E_{i+1}
      &= \left\lceil \frac{(E_i-1)c - h}{q}\right\rceil
    \\&\le\left\lceil (u_{n-i}-1)\frac{c}{q}\right\rceil
    \\&= \left\lceil \left\lfloor \frac{q}{c}u_{n-i-1} \right\rfloor \frac{c}{q}\right\rceil
    \\&\le \left\lceil u_{n-i-1}\right\rceil
    \\&= u_{n-i-1}.
   \end{align*}
   The last equality comes from $u_{n-i-1}$ being an integer.
   Having thus bounded the errors, the proof is complete.
\end{proof}
We can also note that if the algorithm is repeated more than $n$ times, the error stays at 0, since 
$\lceil (u_{n-i}-1)\frac{c}{q}\rceil = \lceil -\frac{c}{q}\rceil = 0$.

\subsection{Related Algorithms}

Modulus computation by Generalized Mersenne primes is widely used in the Cryptography community.
For example, four of the recommended primes in NIST's document ``Recommended Elliptic Curves for Federal Government Use''~\cite{nist} are Generalized Mersenne,
as well as primes used in the Internet Research Task Force's document ``Elliptic Curves for Security''~\cite{rfc7748}.
Naturally, much work has been done on making computations with those primes fast.
Articles like ``Simple Power Analysis on Fast Modular Reduction with Generalized Mersenne Prime for Elliptic Curve Cryptosystems''~\cite{sakai2006simple}
give very specific algorithms \emph{for each} of many well known such primes.
An example is shown in \Cref{alg:solina}.

\begin{algorithm}[H]
	\caption{Fast reduction modulo $p_{192} = 2^{192} - 2^{64} - 1$}
	\label{alg:solina}
	\begin{algorithmic}
		\State \textbf{input} $c \gets (c_5, c_4, c_3, c_2, c_1, c_0)$, where each $c_i$ is a 64-bit word, and $0 \le c < p^2_{192}$.
		\State $s_0 \gets (c_2, c_1, c_0)$
		\State $s_1 \gets (0, c_3, c_3)$
		\State $s_2 \gets (c_4, c_4, 0)$
		\State $s_3 \gets (c_5, c_5, c_5)$
		\State \textbf{return} $s_0 + s_1 + s_2 + s_3 \mod p_{192}$.
	\end{algorithmic}
\end{algorithm}

Division by Mersenne primes is a less common task, but a number of well known division algorithms can be specialized, such as
classical trial division, Montgomery's method and Barrett reduction.


The state of the art appears to be the modified Crandall Algorithm by Chung and Hasan~\cite{chung2006low}.
This algorithm, given in Algorithm \ref{alg:cch} modifies Crandall's algorithm~\cite{crandall1992method} from 1992 to compute division as well as modulo for generalized $2^b-c$ Mersenne primes.\footnote{
	Chung and Hasan also have an earlier, simpler algorithm from 2003~\cite{chung2003more},
	but it appears to give the wrong result for many simple cases.
	This appears in part to be because of a lack of the ``clean up'' while-loop at the end of Algorithm \ref{alg:cch}.
}
\begin{algorithm}[H]
	\caption{Crandall, Chung, Hassan algorithm. For $p=2^b-c$, computes $q, r$ such that $x = qp+r$ and $r<p$.}
	\label{alg:cch}
	\begin{algorithmic}
		\State $q_0 \gets x \rs n$
		\State $r_0 \gets x \andtt (2^b-1)$
		\State $q \gets q_0, r\gets r_0$
		\State $i \gets 0$
		\While{$q_i>0$}
		\State $t \gets q_i*c$
		\State $q_{i+1} \gets t \rs n$
		\State $r_{i+1} \gets t \andtt (2^b - 1)$
		\State $q\gets q+q_{i+1}$
		\State $r\gets r+r_{i+1}$
		\State $i\gets i+1$
		\EndWhile
		\State $t \gets 2^b-c$
		\While{$r\ge t$}
		\State $r\gets r-t$
		\State $q\gets q+1$
		\EndWhile
	\end{algorithmic}
\end{algorithm}
The authors state that for $2n+\ell$ bit input, Algorithm \ref{alg:cch}
requires at most $s$ iterations of the first loop, if $c < 2^{((s-1)n-\ell)/s}$.
This corresponds roughly to the requirement $x < 2^b (2^b/c)^s$, similar to ours.
Unfortunately, the algorithm ends up doing double work, by computing the quotient and remainder concurrently.
The algorithm also suffers from the extra while loop for ``cleaning up'' the computations after the main loop.
In practice, our method is 2-3 times faster. See \Cref{sec:experiments} for an empirical comparison.
%

\section{Experiments}\label{sec:experiments}
We perform experiments on fast implementations of Mersenne hashing (\Cref{alg:Mersenne}) and our Mersenne division algorithm (\Cref{alg:division-generalized}).
All code is available in our repository\\\href{https://github.com/thomasahle/mersenne/}{github.com/thomasahle/mersenne} and compiled with \texttt{gcc -O3}.

We tested \Cref{alg:Mersenne} against hashing over the finite fields $GF(2^{64})$ and $GF(2^{32})$.
The later is implemented, following Lemire~\cite{lemire2014strongly}, using the ``Carry-less multiplication' instruction, CLMUL, supported by AMD and Intel processors~\cite{GUERON2010549}.\footnote{
More precisely, given two $b$-bit numbers $\alpha = \sum_{i = 0}^{b - 1} \alpha_i 2^i$ and $\beta = \sum_{i = 0}^{b - 1} \beta_i 2^i$
the CLMUL instructions calculates $\gamma = \sum_{i = 0}^{2b - 2} \gamma_i 2^i$, where $\gamma_i = \bigoplus_{j = 0}^{j} \alpha_i \beta_{j - i}$.
If we view $\alpha$ and $\beta$ as elements in $GF(2)[x]$ then the CLMUL instruction corresponds to polynomial multiplication.
We can then calculate multiplication in a finite field, $GF(2^b)$, efficiently by noting that for any irreducible polynomial $p(x) \in GF(2)[x]$
of degree $b$ then $GF(2^b)$ is isomorphic to $GF(2)[x] / p(x)$. If we choose $p(x)$ such that the degree of $p(x) - 2^{b}$ is at
most $b/2$ then modulus $p(x)$ can be calculated using two CLMUL instructions.
For $GF(2^{64})$ we use the polynomial $p(x) = x^{64} + x^4 + x^3 + x + 1$ and for $GF(2^{32})$ we use the polynomial $p(x) = x^{32} + x^7 + x^6 + x^2 + 1$.
}
We hash a large number of $64$-bit keys into $[p]$ for $p=2^{89}-1$ using $k$-universal hashing for $k \in \{2, 4, 8\}$.
Since the intermediate values of our calculations take up to $64 + 89$ bits, all computations of \Cref{alg:Mersenne} are done with 128-bit output registers.
We perform the same experiment with $p=2^{61}-1$.
This allows us to do multiplications without splitting into multiple words, at the cost of a slightly shorter key range.

\begin{table}[H]
   \centering
   \begin{tabular}{c c}
      \begin{tabular}{r c | r r r}
          &  &  Mult- & Mersenne       & Carry-less \\
          $k$   & CPU & Shift & $p=2^{89}-1$ & GF($2^{64}$) \\
         \hline
           & a & \textbf{6.4} & 23.6 & 15.1 \\
         2 & b & \textbf{7.7} & 19.0 & 16.7 \\
           & c & \textbf{7.2} & 28.3 & 16.6 \\
         \hline
           & a & 99.3 & \textbf{65.7} & \textbf{65.7} \\
         4 & b & 157.7 &\textbf{68.7}  & \textbf{68.8} \\
           & c & 117.4 & 85.2 & \textbf{67.5}\\
         \hline
             & a & 1615.8 & \textbf{178.4} & 242.4 \\
         8 & b &  1642.3 & \textbf{187.4} & 246.8 \\
           & c & 1949.9 & 228.1 & \textbf{224.1} \\
      \end{tabular}
      &
      \begin{tabular}{r c | r r r}
          &  &  Mult- & Mersenne       & Carry-less \\
          $k$   & CPU & Shift & $p=2^{61}-1$ & GF($2^{32}$) \\
         \hline
             & a & \textbf{4.4} & 13.6 & 13.0 \\
         2 & b & \textbf{3.3} & 14.2 & 13.0 \\
           & c & \textbf{3.2} & 16.5 & 18.4 \\
         \hline
           & a & 57.6 & \textbf{31.6} & 60.3 \\
         4 & a & 54.6 & \textbf{34.3}  & 58.8 \\
           & c & 61.2 & \textbf{41.9} & 74.5 \\
         \hline
           & a & 650.7 & \textbf{88.0} & 218.7 \\
         8 & b & 635.6   & \textbf{88.0} & 212.0 \\
           & c & 750.8 & \textbf{127.8} & 253.2 \\
      \end{tabular}
   \end{tabular}
   \caption{Milliseconds for $10^7$ $k$-universal hashing operations.
      The standard deviation is less than $\pm1$ms.
      The three CPUs tested are
         a) Intel Core i7-8850H; 
         b) Intel Core i7-86650U; 
         c) Intel Xeon E5-2690 
         .
   }
   \label{tab:hashing-experiments}
\end{table}

The results in \Cref{tab:hashing-experiments} show that our methods outperform carry-less Multiplication for larger $k$, while being slower for $k=2$.
For $k=2$ the multiply-shift scheme~\cite{dietzfel96universal} is better yet, so in carry-less multiplication is nearly completely dominated.
For $k=4$, which we use for Count Sketch, the results are a toss-up for $p=2^{89}-1$, but the Mersenne primes are much faster for $p=2^{61}-1$.
We also note that our methods are more portable than carry-less, and we keep the two-for-one advantages described in the article.
\vspace{.5em} 

We tested \Cref{alg:division-generalized} against the state of the art modified Crandall's algorithm by Chung and Hasan (\Cref{alg:cch}), as well as the built-in truncated division algorithm in the GNU MultiPrecision Library, GMP~\cite{granlund2010gnu}.

\begin{table}[H]
   \centering
   \begin{tabular}{ c c }
      \begin{tabular}{ r | r r r }
         $b$ & Crandall & \Cref{alg:division-generalized} & GMP \\
         \hline
         32 & 396 & \textbf   {138}  & 149\\
         64 & 381 &   \textbf {142}  & 161\\
         128 & 564 &  \textbf {157}  & 239\\
         256 & 433 &  \textbf {187}  & 632\\
         512 & 687 &  \textbf {291}  & 1215\\
         1024 & 885 & \textbf {358}  & 2802
      \end{tabular}
      \hspace{.5em}
      &
      \hspace{.5em}
      \begin{tabular}{ r | r r r }
         $b$ & Crandall & \Cref{alg:division-generalized} & GMP\\
         \hline
         32 & 497 & 149 & \textbf{120}\\
         64 & 513 & 198 & \textbf{191}\\
         128 & 538 & \textbf{207} & 293\\
         256 & 571 & \textbf{222} & 603\\
         512 & 656 & \textbf{294} & 1167\\
         1024 & 786 & \textbf{372} & 2633\\
      \end{tabular}
   \end{tabular}
   \caption{Milliseconds for $10^7$ divisions of $2b$-bit numbers with $p=2^b-1$.
      The standard deviation is less than $\pm10$ms.
         On the left, Intel Core i7-8850H.
         On the right, Intel Xeon E5-2690 v4.
   }
   \label{tab:division-experiments}
\end{table}

The results in \Cref{tab:division-experiments} show that our method always outperforms the modified Crandall's algorithm, which itself outperforms GMP's division at larger bit-lengths.
At shorter bit-lengths it is mostly a toss-up between our method and GMP's.

We note that our code for this experiment is implemented entirely in GMP, which includes some overhead that could perhaps be avoided in an implementation closer to the metal.
This overhead is very visible when comparing \Cref{tab:hashing-experiments} and \Cref{tab:division-experiments}, suggesting that an optimized \Cref{alg:division-generalized} would beat GMP even at short bit-lengths.

\bibliographystyle{alpha}
\bibliography{general}

\end{document}